\renewcommand{\paragraph}{\roman{paragraph}}
\newcommand{\F}{\mathbb{F}}
\newtheorem{thm}{\scshape \mdseries  Theorem}[section]
\newtheorem{lem}[thm]{\scshape \mdseries  Lemma}
\newtheorem{ex}[thm]{\scshape \mdseries  Example}
\begin{document}
\title{\bf On self-dual and LCD \\ double circulant and double negacirculant codes\\ over $\F_q+u\F_q$
\thanks{This research is supported by National Natural Science Foundation of China (61672036), Technology Foundation for Selected Overseas Chinese Scholar, Ministry of Personnel of China (05015133) and the Open Research Fund of National Mobile Communications Research Laboratory, Southeast University (2015D11) and Key projects of support program for outstanding young talents in Colleges and Universities (gxyqZD2016008).}
}
\author{
Minjia Shi,\thanks{ Minjia Shi, Key Laboratory of Intelligent Computing Signal Processing, Ministry of Education, Anhui University, No.3 Feixi Road, Hefei, Anhui, 230039, China, School of Mathematical Sciences, Anhui University, Hefei, Anhui, 230601,
China and National Mobile Communications Research Laboratory, Southeast University, China, {\tt smjwcl.good@163.com}}
Hongwei Zhu,\thanks{Hongwei Zhu, School of Mathematical Sciences, Anhui University, Hefei, Anhui, 230601, {\tt zhwgood66@163.com}}
Liqin Qian,\thanks{Liqin Qian, School of Mathematical Sciences, Anhui University, Hefei, Anhui, 230601, {\tt qianliqin\_1108@163.com}}
Lin Sok,\thanks{Lin Sok, School of Mathematical Sciences, Anhui University, Hefei, Anhui, 230601 and Department of Mathematics, Royal University of Phnom Penh, 12156 Phnom Penh, Cambodia, { \tt sok.lin@rupp.edu.kh}}
and
Patrick Sol\'e\thanks{Patrick Sol\'e, CNRS/LAGA, University of Paris 8, 93 526 Saint-Denis, France, {\tt sole@enst.fr}}
}

\date{}
\maketitle
\begin{abstract}
Double circulant codes of length $2n$ over the semilocal ring $R=\F_q+u\F_q,\, u^2=u,$ are studied when $q$ is an odd prime power, and $-1$ is a square in $\F_q.$
Double negacirculant codes of length $2n$ are studied over $R$
  when $n$ is even and $q$ is an odd prime power. Exact enumeration of self-dual and LCD such codes for given length $2n$ is given.
  Employing a duality-preserving Gray map, self-dual and LCD codes of length $4n$ over $\F_q$ are constructed.
  Using random coding and the Artin conjecture, the relative distance of these codes is bounded below. The parameters of examples of the modest length are computed.
  Several such codes are optimal.
\end{abstract}
{\bf Keywords:} double circulant codes, double negacirculant codes, codes over rings, self-dual codes, LCD codes, Artin conjecture \\
{\bf MSC(2010):} 94 B15, 94 B25, 05 E30

\section{Introduction}
Double circulant and double negacirculant self-dual codes over finite fields have been studied recently in \cite{A} and \cite{AOS}, respectively, from the viewpoint of enumeration and asymptotic performance. The main tool there is the CRT approach to quasi-cyclic codes as introduced in \cite{LS1}, and generalized to quasi-twisted codes in \cite{Y}. It is natural to extend these results to other classes of ring alphabets, beyond finite fields, building on the theory developed in \cite{LS2}. Double circulant self-dual codes over a commutative ring can only exist if there is a square root of $-1$ over that ring \cite{LS2}. For the ring $R=\F_q+u\F_q,\, u^2=u,$ a ring of current interest \cite{SGS}, this requirement leads to the condition that $-1$ is a square in $\F_q.$ This ring is studied here because of a duality preserving Gray map that turns self-dual codes into self-dual codes, and similarly, LCD codes into LCD codes. While this Gray map was defined already in \cite{ZW}, its duality properties were not considered there. LCD codes are defined as codes intersecting trivially with their dual. They enjoy a wealth of combinatorial and algebraic properties \cite{LCD,QCCD}. In particular quasi-cyclic codes over fields have been proved to contain infinite families of good long LCD codes \cite{QCCD}. In this paper we study both
self-dual and LCD double circulant and double negacirculant codes over $R$ and show that all four families contain arbitrarily long codes with fixed rate and with relative distance of the Gray image bounded below by a nonzero constant. These results are based on the complete enumeration of the codes in these four families for given length $2n.$ While, for asymptotic purposes, only the case of $n$ a prime with $q$ primitive modulo $n$ is needed, the counting results are developed for all $n$'s. The said arithmetic condition holds for infinitely many $n$'s by Artin conjecture \cite{M}, which is proved conditionally under GRH \cite{H}. In the case of negacirculant codes, $n$ is supposed to be a power of $2$ for asymptotics, and the truth of Artin conjecture is not needed. The factorization of $x^n+1$ in that case depends on properties of Dickson polynomials.

The material is organized as follows. Section $2$ contains the preliminary requisite necessary to the further sections. Section $3$ develops the machinery of the CRT approach to double circulant and double negacirculant codes, and derives the main enumeration results. Section $4$ is dedicated to asymptotic bounds on the relative Hamming distance of the Gray image of these codes when their lengths tends to infinity. Section $5$ computes some numerical examples.
Section $6$ concludes the article and discusses open problems.
\section{\textbf{Preliminaries}}
\subsection{\textbf{The Ring $\F_q+u\F_q$}}
Let $q$ be a prime power, and let $\mathbb{F}_q$ be the finite field of order $q.$ Consider the ring $R=\F_q+u\F_q$ where $u^2=u$.
It is semi-local with maximal ideals $(u)$ and $(u-1)$. The group of units $R^*$ is isomorphic, as a multiplicative group, to the product of two
cyclic groups of order $(q-1)$. To construct double circulant codes, we need to know when the ring $R=\F_q+u\F_q$ contains a square root of $-1$.

{\thm The ring $R$ contains a square root of $-1$ if one and only one of the following three conditions is true:
 \begin{itemize}
\item $q$ is a power of $2$;
\item $q=p^b$, where $p$ is a prime congruent to $1\mod 4$;
\item $q=p^{2b}$, where $p$ is a prime congruent to $3\mod 4,$ and $b\ge 1$ is an integer.
\end{itemize}
}

\begin{proof}
Note first that $\F_q$ contains a square root of $-1$ iff one of the above three conditions is true \cite{LS2}.
The condition is sufficient.
If one of the three cases is true, then $\F_q,$ a subring of $R,$ already contains such a root. The condition is necessary.
Write such a root as $x+uy$ with $x,y \in \F_q$ . Then $(x+uy)^2=x^2+u(y^2+2xy),$ showing that $x^2=-1,$
and that $x$ is a square root of $-1$ in $\F_q.$
\end{proof}

{\bf Remark:} Note that for each $x$ in the above proof there are two possible value of $y$ that is $y=0,$ or $y=-2x.$ There are thus zero or four distinct square roots of $-1$ in $R.$
\subsection{\textbf{ Norm Function over Finite Field}}
For all $x\in\F_{q^n}$, the norm of $x$ over $\F_q$ is a map {\bf Norm}: $\F_{q^n}\longrightarrow\F_q$ defined by
$$Norm(x)=x^{(q^n-1)/(q-1)}.$$
Moreover, Norm is a multiplicative homomorphism which is surjective (\cite[Theorem 2.28]{FF}). $Norm(0)=0$, so it maps $\F^*_{q^n}$ onto $\F^*_{q}$, where each nonzero element in $\F^*_q$ has a preimage of size $(q^n-1)/(q-1)$ in $\F^*_{q^n}.$
\subsection{\textbf{Codes}}
A {\bf linear code} $C$ over $R$ of length $n$ is an $R$-submodule of $R^n$. If $x=(x_1,x_2,\ldots,x_n)$ and $y=(y_1,y_2,\ldots,y_n)$ are two elements of $R^n$, their standard (Euclidean) inner product is defined by $\langle x,y\rangle=\sum\limits_{i=1}^{n}x_iy_i$, where the operation is performed in $R$. The Euclidean dual code of $C$ is denoted by $C^{\perp}$ and defined as $C^{\perp}=\{y\in R^n\mid\langle x,y\rangle=0,\forall x\in C\}$.
A linear code $C$ of length $n$ over $R$ is called {\bf self-dual} if $C=C^{\perp}$.
A linear code $C$ of length $n$ over $R$ is called {\bf linear complementary dual (LCD)} if $C \cap C^{\perp}=\{\mathbf{0}\}.$

 A matrix $A$ over $R$ is said to be {\bf circulant} ($resp.$ {\bf negacirculant}) if its rows are obtained by successive shifts ($resp.$ {\bf negashifts}) from the first row. A code $C$ is called {\bf double circulant} ($resp.$ {\bf double negacirculant}) over $R$ if its generator matrix $G$ will be of the form $G=(I,A)$, where $I$ is the identity matrix of order $n$ and $A$ is a circulant ($resp.$ negacirculant) matrix of the same order.

If $C(n)$ is a family of codes with parameters $[n,k_n,d_n]$ over $\F_q$, the rate $\rho$ and relative distance $\delta$ are defined as
$\rho=\limsup\limits_{n\rightarrow \infty}\frac{k_n}{n}$
and
$\delta=\liminf\limits_{n\rightarrow \infty}\frac{d_n}{n}$, respectively.
A family of code is {\bf good} if $\rho \delta >0.$
Both limits are finite as limits of bounded quantities.
\subsection{\textbf{Gray Map}}
\subsubsection{$q$ Odd}
The Gray map $\phi$ over the ring $R$ is defined by $\phi(a+ub)=(-b,2a+b)$ for $a,b\in\F_q$ from $R$ to $\F^2_q$. Clearly, $\phi$ is a bijection from $R$ to $\F^2_q$, which extends naturally to a map from $R^n$ to $\F_q^{2n}$. Adopting the idea of \cite[Theorem 2.10.3]{LCD}, we can prove the following result.

\begin{thm}\label{a12}
If $C$ is a self-dual ($resp.$ LCD) code of length $n$ over $R$, then $\phi(C)$ is a self-dual ($resp.$ LCD) code of length $2n$ over $\F_{q}.$
\end{thm}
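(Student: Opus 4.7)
The plan is to show that $\phi$ preserves the orthogonality of linear codes via an inner product identity, and then deduce both statements by a counting argument. First I would verify that $\phi$ is an $\F_q$-linear bijection from $R^n$ onto $\F_q^{2n}$. Linearity is immediate from the formula, while bijectivity follows from the explicit inverse $(x,y)\mapsto\bigl((x+y)/2,\,-x\bigr)$, which exists because $q$ is odd; in particular $|\phi(C)|=|C|$ for any subset $C\subseteq R^n$.

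The core of the argument is an inner product identity. Writing $x=a+ub$, $y=a'+ub'$ with $a,a',b,b'\in\F_q^n$, a direct expansion yields
$$\langle\phi(x),\phi(y)\rangle = 2\langle a,a'\rangle + 2\langle a+b,\,a'+b'\rangle,$$
while $\langle x,y\rangle_R = \langle a,a'\rangle + u\bigl(\langle a,b'\rangle+\langle b,a'\rangle+\langle b,b'\rangle\bigr)$. Since $1$ and $u$ are $\F_q$-linearly independent in $R$, the vanishing of $\langle x,y\rangle_R$ forces both $\langle a,a'\rangle=0$ and $\langle a,b'\rangle+\langle b,a'\rangle+\langle b,b'\rangle=0$, hence both summands in $\langle\phi(x),\phi(y)\rangle$ vanish. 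This gives $\phi(C^\perp)\subseteq\phi(C)^\perp$.

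To promote this inclusion to an equality, I would invoke the fact that $R\cong\F_q\times\F_q$ is a Frobenius ring, so $|C|\cdot|C^\perp|=|R|^n=q^{2n}$. Bijectivity of $\phi$ then gives $|\phi(C^\perp)|=|C^\perp|=q^{2n}/|C|=q^{2n}/|\phi(C)|=|\phi(C)^\perp|$, so $\phi(C^\perp)=\phi(C)^\perp$. The two desired statements now follow at once: if $C=C^\perp$ then $\phi(C)=\phi(C^\perp)=\phi(C)^\perp$; if $C\cap C^\perp=\{0\}$, then by the set-theoretic fact that a bijection commutes with intersections, $\phi(C)\cap\phi(C)^\perp=\phi(C)\cap\phi(C^\perp)=\phi(C\cap C^\perp)=\{0\}$.

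The main obstacle is the inner product identity itself: the coefficients $-b$ and $2a+b$ are calibrated precisely so that the cross terms arising from $u^2=u$ line up, and the factor $2$ is what makes the hypothesis $q$ odd genuinely necessary. Once that identity is established, everything else is routine cardinality bookkeeping.
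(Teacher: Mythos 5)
Your proof is correct and follows essentially the same route as the paper: the same expansion showing $\langle\phi(x),\phi(y)\rangle=4\langle a,a'\rangle+2(\langle a,b'\rangle+\langle b,a'\rangle+\langle b,b'\rangle)$ vanishes whenever $\langle x,y\rangle_R=0$, followed by bijectivity to upgrade $\phi(C^\perp)\subseteq\phi(C)^\perp$ to equality and to handle the LCD case via $\phi(C\cap C^\perp)=\phi(C)\cap\phi(C^\perp)$. The only difference is cosmetic: you justify the cardinality step $|C^\perp|=q^{2n}/|C|$ explicitly via the Frobenius property of $R\cong\F_q\times\F_q$, which the paper leaves implicit.
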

\begin{proof}
Consider a pair of codewords $w=(w_1,w_2,\ldots,w_n)$, $v=(v_1,v_2,\ldots,v_n)\in C$, where $w_i=a_i+b_iu,$ $v_i=c_i+d_iu$ and
$a_i,b_i,c_i,d_i\in \F_q$ for $1\leq i\leq n.$ If $C$ is a self-dual code over $R$, then
$\langle w,v\rangle=\sum\limits_{i=1}^n(a_i+b_iu)(c_i+d_iu)=\sum\limits_{i=1}^{n}(a_ic_i+(a_id_i+b_ic_i+b_id_i)u)=0$, which implies that
$$\sum\limits_{i=1}^na_ic_i=0 \ {\rm{and}} \ \ \sum\limits_{i=1}^n(a_id_i+b_ic_i+b_id_i)=0.$$

According to the definition of the Gray map, we have
\begin{eqnarray*}
  \langle\phi(w),\phi(v)\rangle &=&\langle\phi(a_1+b_1u,\ldots,a_n+b_nu),\phi(c_1+d_1u,\ldots,c_n+d_nu)\rangle\\
  &=&\sum\limits_{i=1}^n(b_id_i+4a_ic_i+2a_id_i+2b_ic_i+b_id_i)\\
  &=&\sum\limits_{i=1}^n(4a_ic_i+2(a_id_i+b_ic_i+b_id_i))=0.\\
\end{eqnarray*} It implies that $\phi(C^\perp)\subseteq\phi(C)^{\perp}$. Since the Gray map $\phi$ is a bijection from $R$ to $\F_q^2$, then $\phi(C^{\perp})=\phi(C)^{\perp}$.

If $C$ is a LCD code over $R$, then $C\cap C^{\perp}=\{\mathbf{0}\}$.
We can easily obtain that $\phi(C\cap C^{\perp})\subseteq \phi(C)\cap \phi(C^{\perp})$. Because $\phi$ is a bijection from $R$ to $\F_q^2$, equality holds in the preceding inclusion and
$$\phi(C)\cap \phi(C)^{\perp}= \phi(C)\cap \phi(C^{\perp})=\phi(C\cap C^{\perp})=\{\mathbf{0}\}.$$
Thus $\phi(C)$ is a LCD code of length $2n$ over $\F_q.$
\end{proof}

\subsubsection{$q$ Arbitrary}

We can generalize the map $\beta$ of \cite{D++} as follows.
Define a map $\beta: R \rightarrow \F_q^2$ by the formula $\beta(a+ub)=(a,a+b),$ for all $a,b\in \F_q.$
\begin{thm}\label{a13}
If $C$ is a self-dual ($resp.$ LCD) code of length $n$ over $R$, then $\beta(C)$ is a self-dual ($resp.$ LCD) code of length $2n$ over $\F_{q}.$
\end{thm}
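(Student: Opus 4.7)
The plan is to imitate the proof of Theorem \ref{a12}, replacing $\phi$ by $\beta$ throughout. First I would check that $\beta$ is a bijection from $R$ onto $\F_q^2$: its inverse sends $(x,y)$ to $x+u(y-x)$, so the coordinate-wise extension $\beta:R^n\to \F_q^{2n}$ is a bijection as well. This bijectivity, together with the identity $\beta(w+v)=\beta(w)+\beta(v)$, is what will let the set-theoretic parts of the argument go through verbatim.

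For the self-dual claim, I would take two codewords $w,v\in C$ with $w_i=a_i+b_iu$ and $v_i=c_i+d_iu$ and, exactly as in the proof of Theorem \ref{a12}, expand $\langle w,v\rangle =0$ in $R$ into the two $\F_q$-relations
\[
\sum_{i=1}^n a_ic_i=0,\qquad \sum_{i=1}^n\bigl(a_id_i+b_ic_i+b_id_i\bigr)=0.
\]
The core computation is then
\[
\langle \beta(w),\beta(v)\rangle=\sum_{i=1}^n\bigl[a_ic_i+(a_i+b_i)(c_i+d_i)\bigr]=2\sum_{i=1}^n a_ic_i+\sum_{i=1}^n\bigl(a_id_i+b_ic_i+b_id_i\bigr),
\]
which vanishes by the two displayed identities. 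Hence $\beta(C^\perp)\subseteq\beta(C)^\perp$; bijectivity of $\beta$ forces the two sides to have equal cardinality, and therefore to coincide, so $C=C^\perp$ gives $\beta(C)=\beta(C)^\perp$.

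For the LCD claim, bijectivity of $\beta$ gives
\[
\beta(C\cap C^\perp)=\beta(C)\cap\beta(C^\perp)=\beta(C)\cap\beta(C)^\perp,
\]
so $C\cap C^\perp=\{\mathbf{0}\}$ implies $\beta(C)\cap\beta(C)^\perp=\{\mathbf{0}\}$.

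I do not anticipate any serious obstacle: the whole argument reduces to the three-line inner-product expansion above. The one noteworthy point is that this identity does not require dividing by $2$, which is precisely why $\beta$ works for \emph{arbitrary} $q$ whereas the Gray map $\phi$ of Theorem \ref{a12} was restricted to odd $q$. Accordingly, I would present the proof as a short parallel to Theorem \ref{a12}, flagging this characteristic-two robustness as the only substantive difference.
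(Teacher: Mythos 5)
Your proposal is correct and follows essentially the same route as the paper: expand $\langle\beta(w),\beta(v)\rangle=\sum_i\bigl[a_ic_i+(a_i+b_i)(c_i+d_i)\bigr]$, kill both resulting sums using the two relations coming from $\langle w,v\rangle=0$ in $R$, and then upgrade $\beta(C^\perp)\subseteq\beta(C)^\perp$ to equality via bijectivity, with the LCD case handled exactly as in Theorem \ref{a12}. Your added remark that no division by $2$ is needed correctly identifies why $\beta$ works in arbitrary characteristic, which is precisely the point of the paper's separate treatment.
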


\begin{proof}
We claim that $\beta(C^\bot)=\beta(C)^\bot.$ The rest of the proof follows in the same way as the proof of Theorem \ref{a12}. To prove the claim note that
if $\langle w,v\rangle=0$ like in the said proof, then
$$\langle\beta(w),\beta(v)\rangle=\sum_{i=1}^n a_ic_i+(a_i+b_i)(c_i+d_i)=\sum_{i=1}^n a_id_i+b_ic_i+b_id_i=0.$$
This shows that $\beta(C^\bot)\subseteq \beta(C)^\bot,$ and $\beta$ being bijective, that $\beta(C^\bot)=\beta(C)^\bot.$
\end{proof}

\section{\textbf{ Algebraic Structure of Double Circulant and Double Negacirculant Codes}}
\subsection{\textbf{Double Circulant Codes of Odd Length}}
In this subsection, we assume that $n$ is an odd integer, $q$ is a prime power and $\gcd(n,q)=1$. We can cast the factorization of $x^n-1$ into distinct irreducible polynomials over $R$ in the form
$$x^n-1=\alpha(x-1)\prod\limits_{i=2}^sg_i(x)\prod\limits_{j=1}^th_j(x)h_j^*(x),$$
where $\alpha\in R^*,$ $g_i(x)$ is a self-reciprocal polynomial with degree $2e_i$ for $2\leq i\leq s$, and $h_j^*(x)$ is the reciprocal polynomial of $h_j(x)$ with degree $d_j$ for $1\leq j\leq t$.
By the Chinese Remainder Theorem (CRT), we have
\begin{eqnarray*}
  \frac{R[x]}{(x^n-1)} &\simeq&\frac{R[x]}{(x-1)}\oplus \left(\bigoplus_{i=2}^s R[x]/( g_i(x))\right)\oplus\left(\bigoplus_{j=1}^t(R[x]/( h_j(x))\oplus(R[x]/( h_j^*(x))))\right)\\
  &\simeq&R\oplus\left(\bigoplus_{i=2}^s \frac{\F_q[u,x]}{(u^2-u,g_i(x))}\right)\oplus\left(\bigoplus_{j=1}^t\frac{\F_q[u,x]}{(u^2-u,h_j(x))}
  \oplus\frac{\F_q[u,x]}{(u^2-u,h_j^*(x))}\right)\\
  &\simeq&R\oplus\left(\bigoplus_{i=2}^s \F_{q^{2e_i}}+u\F_{q^{2e_i}}\right)\oplus\left(\bigoplus_{j=1}^t(\F_{q^{d_j}}+u\F_{q^{d_j}})
  \oplus(\F_{q^{d_j}}+u\F_{q^{d_j}})\right)\\
  &:=&R\oplus\left(\bigoplus_{i=2}^s R_{2e_i}\right)\oplus\left(\bigoplus_{j=1}^t R_{d_j}
  \oplus R_{d_j}\right),\\
\end{eqnarray*}
where we let $R_\ell =\F_{q^{\ell}}+u\F_{q^{\ell}},$ for any nonegative integer $\ell.$
Note that all of these rings are extensions of $R$. This decomposition naturally extends to $\left({\frac{R[x]}{(x^n-1)}}\right)^2$ as
\begin{eqnarray*}
  \left(\frac{R[x]}{(x^n-1)}\right)^2 &\simeq&R^2\oplus\left(\bigoplus_{i=2}^s ({R_{2e_i}})^2\right)\oplus\left(\bigoplus_{j=1}^t ({R_{d_j}})^2
  \oplus ({R_{d_j}})^2\right).\\
\end{eqnarray*}
In particular, each linear code $C$ of length $2$ over $\frac{R[x]}{(x^n-1)}$ can be decomposed as the ``CRT sum"
$$C\simeq C_1\oplus\left(\bigoplus_{i=2}^s{C_i}\right)\oplus\left(\bigoplus_{j=1}^t({C_j^\prime}\oplus{C_j^{\prime\prime}})\right),$$
where $C_1$ is a linear code over $R$, for each $2\leq i\leq s$, $C_i$ is a linear code over $R_{2e_i}$, and for each $1\leq j\leq t$, $C_j^\prime$ and $C_j^{\prime\prime}$ are linear codes over $R_{d_j}$, which are called the constituents of $C$.

In self-reciprocal case, we give the definition of the Hermitian inner product over $R_{2e_i}$. For all $z=z_1+uz_2\in R_{2e_i}$, where $z_1,z_2\in\F_{q^{2e_i}}$, the conjugate of $z$ over $R_{2e_i}$
is $\overline{z}=z_1^{q^{e_i}}+uz_2^{q^{e_i}}$, and the Hermitian inner product is defined as: $(z,z^{\prime})\cdot(w,w^{\prime})=z\overline{w}+z^{\prime}\overline{w^{\prime}}$, where $z, z^{\prime}, w, w^{\prime}\in R_{2e_i}.$
 \begin{thm}\label{a1}
Let $n$ denote a positive odd integer, and let $q$ be a prime power coprime with $n$. Assume that the factorization of $x^n-1$ into irreducible polynomials over $R$ is of the form
$$x^n-1=\alpha(x-1)\prod\limits_{i=2}^sg_i(x)\prod\limits_{j=1}^th_j(x)h_j^*(x),$$
with $\alpha\in R^*$, $n=1+\sum\limits_{i=2}^s2e_i+2\sum\limits_{j=1}^td_j$. Then the total number of self-dual double circulant codes over $R$ is
$$4\prod\limits_{i=2}^{s}\left(1+q^{e_i}\right)^2\prod\limits_{j=1}^t\left(q^{d_j}-1\right)^2.$$
\end{thm}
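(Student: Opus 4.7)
The plan is to translate the counting problem to polynomial language, invoke the CRT isomorphism displayed just above the theorem, and reduce the enumeration to counting self-dual constituents factor by factor.

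First I would identify $R^n$ with $R[x]/(x^n-1)$ in the usual way, so that a double circulant code $C$ with generator $(I_n,A)$ becomes the rank-one $R[x]/(x^n-1)$-submodule of $\bigl(R[x]/(x^n-1)\bigr)^2$ generated by $(1,a(x))$, where $a(x)$ encodes the first row of $A$. The CRT sends $(1,a(x))$ to $(1,a_1)$ at the $x-1$ factor, to $(1,a_i)$ at each $g_i$, and to $(1,b_j),(1,c_j)$ at each pair $h_j,h_j^*$, so every constituent is a free rank-one submodule of its length-two ambient space. The Euclidean inner product decomposes compatibly: as the Euclidean pairing on the $x-1$ block, as the Hermitian pairing already defined in the text on each $g_i$ block (since $x\mapsto x^{-1}$ acts on $R[x]/(g_i(x))$ as the conjugation $z_1+uz_2\mapsto z_1^{q^{e_i}}+uz_2^{q^{e_i}}$), and as a perfect pairing between the $h_j$ and $h_j^*$ blocks (since $x\mapsto x^{-1}$ swaps these two factors). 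Thus $C$ is Euclidean self-dual iff $C_1$ is self-dual over $R$, each $C_i$ is Hermitian self-dual over $R_{2e_i}$, and $C_j''=(C_j')^\perp$ under the paired Euclidean pairing.

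Next I would count self-dual constituents of each type. For $C_1=R\cdot(1,a_1)$, self-orthogonality forces $a_1^2=-1$; the remark following Theorem~1 supplies exactly four such $a_1\in R$ under the standing hypothesis that $-1$ is a square in $\F_q$, giving the constant factor~$4$. For the paired constituents $(C_j',C_j'')$ at a reciprocal pair $h_jh_j^*$, the paired duality forces $b_jc_j=-1$ in $R_{d_j}$, so $c_j=-b_j^{-1}$ is determined by $b_j$ and the count is $|R_{d_j}^*|$; the idempotent decomposition $R_{d_j}\simeq\F_{q^{d_j}}\oplus\F_{q^{d_j}}$ coming from $u,1-u$ yields $|R_{d_j}^*|=(q^{d_j}-1)^2$, accounting for the last product in the formula.

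The main work is the Hermitian count at each self-reciprocal $g_i$. Writing $a_i=z_1+uz_2$ with $z_1,z_2\in\F_{q^{2e_i}}$ and $\overline{a_i}=z_1^{q^{e_i}}+uz_2^{q^{e_i}}$, the condition $a_i\overline{a_i}=-1$ expands via $u^2=u$ into a scalar equation $z_1^{q^{e_i}+1}=-1$ together with a $u$-coefficient equation $z_1z_2^{q^{e_i}}+z_2z_1^{q^{e_i}}+z_2^{q^{e_i}+1}=0$. Using the Frobenius expansion
\[
(z_1+z_2)^{q^{e_i}+1}=z_1^{q^{e_i}+1}+z_1z_2^{q^{e_i}}+z_2z_1^{q^{e_i}}+z_2^{q^{e_i}+1},
\]
the system decouples into the two independent norm equations $z_1^{q^{e_i}+1}=-1$ and $(z_1+z_2)^{q^{e_i}+1}=-1$ over $\F_{q^{2e_i}}$. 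By Subsection~2.2 the norm map $\F_{q^{2e_i}}^*\to\F_{q^{e_i}}^*$ is surjective with every fiber of size $q^{e_i}+1$, so each equation has $q^{e_i}+1$ solutions and the $g_i$ constituent contributes $(q^{e_i}+1)^2$. Multiplying the three contributions over all factors of $x^n-1$ yields the stated total.

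The chief obstacles I anticipate are verifying the CRT compatibility between Euclidean duality and the three different pairings listed above, and spotting the Frobenius identity that decouples the Hermitian system; the remaining counts are routine bookkeeping once those two ingredients are in place.
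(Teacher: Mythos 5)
Your proposal is correct and follows essentially the same route as the paper: decompose via CRT, count the four self-dual constituents at $x-1$ (the four square roots of $-1$ in $R$), reduce the Hermitian condition at each self-reciprocal $g_i$ to the two norm equations $Norm(z_1)=Norm(z_1+z_2)=-1$ giving $(q^{e_i}+1)^2$, and count $|R_{d_j}^*|=(q^{d_j}-1)^2$ dual pairs at each reciprocal pair. The only difference is that you make explicit the compatibility of the Euclidean pairing with the CRT decomposition, which the paper leaves implicit.
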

\begin{proof}
We can count the number of self-dual double circulant codes by counting their constituent codes. There are four self-dual codes $C_1$ of length $2$ over $R$,
whose generators are $(1, \omega)$, $(1, -\omega)$, $(1, \omega(1-2u))$ and $(1,\omega(2u-1))$, respectively, where $\omega^2=-1$ and $\omega\in \F_q$.
More generally, a factor $g_i(x)$ of degree $2e_i$ leads to counting self-dual Hermitian codes $C_i$ of length $2$ over $R_{2e_i}.$ If $(1,c_{e_i})$ is the generator of $C_i$
then $(1,c_{e_i})\cdot(1,c_{e_i})=1+c_{e_i}\overline{c_{e_i}}=1+c_{e_i}{c_{e_i}}^{q^{e_i}}=0$. Let $c_{e_i}=x+uy$, where $x,y\in\F_{q^{2e_i}}$, we then have
 \begin{equation*}\label{den1}
 \small
 1+(x+uy)(x^{q^{e_i}}+uy^{q^{e_i}})=0
\Longleftrightarrow
\begin{cases}
 \emph{ }x\cdot x^{q^{e_i}}=-1, \\
   \emph{ }(x+y)(x+y)^{q^{e_i}}=-1,\\
\end{cases}\Longleftrightarrow
\begin{cases}
 \emph{ }Norm(x)=-1, \\
   \emph{ }Norm(x+y)=-1,\\
\end{cases}
\end{equation*}
where  $Norm$ is a map from $\F_{q^{2e_i}}$ to $\F_{q^{e_i}}$. So there are $q^{e_i}+1$ roots for $Norm(x)=-1$ and $q^{e_i}+1$
roots for $Norm(x+y)=-1$. Therefore, the number of ordered pairs $(x,y)$ is equal to $(q^{e_i}+1)^2$.

In the last case about reciprocal pairs, note that $h_j(x)$ and $h^*_j(x)$ of both degree $d_j$ leads to counting dual pairs of codes (for the Euclidean inner product) of length $2$ over $R_{d_j}$, that is to count the number of solutions of the equation $1+c_{d_j}^{\prime}c_{d_j}^{\prime\prime}=0$, where $(1,c_{d_j}^{\prime})$ and $(1,c^{\prime\prime}_{d_j})$ are the generators of $C_{j}^{\prime}$ and $C_{j}^{\prime\prime}$, respectively.
If $c_{d_j}^{\prime}\in R_{d_j}^*$, then $c_{d_j}^{\prime\prime}=-\frac{1}{c_{d_j}^{\prime}}$. There are $|R_{d_j}^*|=(q^{d_j}-1)^2$ choices for
$(c_{d_j}^{\prime}, c_{d_j}^{\prime\prime}).$ If $c_{d_j}^{\prime}\in R_{d_j}\backslash R_{d_j}^*$, then $c_{d_j}^{\prime}=ux\in (u)$ or $c_{d_j}^{\prime}=(u-1)x\in (u-1)$.
In this case, we claim that $1+c_{d_j}^{\prime}c_{d_j}^{\prime\prime}=0$ cannot occur. Otherwise,
by  reducing the equation modulo $u$ or $u-1$, we would get $1=0$ in $\F_{q^{d_j}}$, contradiction. The proof of the theorem is now completed.
\end{proof}
 \begin{thm}\label{a1}
Keep the same notations as above, then the total number of LCD double circulant codes over $R$ is
$$(q^2-4)\prod\limits_{i=2}^{s}(q^{4e_i}-(q^{e_i}+1)^2)\prod\limits_{j=1}^t(q^{4d_j}-2q^{3d_j}+3q^{2d_j}-2q^{d_j}+1).$$
\end{thm}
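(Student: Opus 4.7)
The plan is a constituent-by-constituent count paralleling the proof of Theorem~3.1. Starting from the CRT decomposition
$$C \simeq C_1 \oplus \bigoplus_{i=2}^s C_i \oplus \bigoplus_{j=1}^t (C_j' \oplus C_j''),$$
with each piece a free rank-one module generated by $(1,c)$ over its constituent ring, I would first argue that the LCD property transports across CRT, so $C \cap C^{\perp} = \{0\}$ holds iff each constituent intersects its CRT-induced dual trivially. The induced duality is Euclidean on $R^2$ for the $(x-1)$ factor; Hermitian on $R_{2e_i}^2$ for each self-reciprocal $g_i$, with conjugation the $q^{e_i}$-power Frobenius extended $R$-linearly to $R_{2e_i}$, which is the map realizing $\alpha \mapsto \alpha^{-1}$ on roots of $g_i$; and ``swap-Euclidean'' across each reciprocal pair, so that the pair condition reads $C_j' \cap (C_j'')^{\perp} = \{0\}$ and symmetrically.

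For each constituent generated by $(1,c)$, a short computation identifies the relevant intersection with the annihilator of $1+c^2$, $1+c\overline{c}$, or $1+c_j'c_j''$ respectively, so the LCD condition reduces to that ring element being a unit. At this point I would invoke the idempotent decomposition $R_\ell \simeq \F_{q^\ell} \times \F_{q^\ell}$ induced by $u$ and $1-u$, splitting each unit condition into two independent non-vanishing conditions over the base or extension field.

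The counts then proceed field by field. For $C_1$, I count $(a,b) \in \F_q^2$ with $1+a^2 \ne 0$ and $1+(a+b)^2 \ne 0$; since $-1$ is a square in $\F_q$, each condition rules out exactly two values. For $C_i$, I count $(x,y) \in \F_{q^{2e_i}}^2$ with $Norm(x) \ne -1$ and $Norm(x+y) \ne -1$, using that the norm map $\F_{q^{2e_i}} \to \F_{q^{e_i}}$ has fibers of size $q^{e_i}+1$, exactly as in Theorem~3.1. For the reciprocal pair, the condition $1+c_j'c_j'' \in R_{d_j}^*$ splits into two field conditions $1+\alpha\gamma \ne 0$ on the two idempotent components, each contributing $q^{2d_j} - (q^{d_j}-1) = q^{2d_j} - q^{d_j} + 1$ choices, and squaring gives the factor $(q^{2d_j}-q^{d_j}+1)^2$ appearing in the statement. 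Multiplying the constituent counts delivers the claimed total.

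The main obstacle is the first step: rigorously pinning down how the Euclidean inner product on $(R[x]/(x^n-1))^2$ decomposes across the CRT summands, in particular verifying that the conjugation on a self-reciprocal constituent is the $q^{e_i}$-Frobenius extended $R$-linearly to $R_{2e_i}$ (fixing $u$), and that for a reciprocal pair the dual of the $h_j$-constituent lives in the $h_j^*$-constituent. Once this duality bookkeeping is in place, the remainder is elementary finite-field counting that mirrors the self-dual enumeration.
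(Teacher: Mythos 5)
Your overall strategy --- CRT decomposition, reduction of the LCD property to each constituent, and the criterion that the constituent generated by $(1,c)$ is LCD if and only if $1+c^2$ (resp.\ $1+c\overline{c}$, resp.\ $1+c_j'c_j''$) is a \emph{unit} of the constituent ring --- is sound, and for the reciprocal pairs your count $(q^{2d_j}-q^{d_j}+1)^2$ does expand to the stated factor $q^{4d_j}-2q^{3d_j}+3q^{2d_j}-2q^{d_j}+1$, in agreement with the paper's case analysis. The difficulty is that your counts for the other two types of constituents do not reproduce the stated factors, so your closing step ``multiplying the constituent counts delivers the claimed total'' does not go through. For $C_1$ you require $1+a'^2\neq 0$ and $1+(a'+a'')^2\neq 0$, which (with $-1$ a square in $\F_q$, $q$ odd) gives $(q-2)^2$ choices, not $q^2-4$; for a self-reciprocal $g_i$ you require $Norm(x)\neq -1$ and $Norm(x+y)\neq -1$, and since each norm fibre has size $q^{e_i}+1$ this gives $(q^{2e_i}-q^{e_i}-1)^2$ choices, not $q^{4e_i}-(q^{e_i}+1)^2$. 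You do not notice or reconcile this discrepancy.

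It is worth naming its source. The paper obtains its first two factors as ``total minus self-dual,'' justified by the assertion that a length-$2$ code with a single-row generator matrix which is not self-dual must be LCD. That assertion is true over a field, but over $R_\ell\simeq\F_{q^\ell}\times\F_{q^\ell}$ the element $1+c\overline{c}$ can be a nonzero zero-divisor, in which case $R_\ell(1,c)$ is neither self-dual nor LCD; your unit criterion is the standard characterization of LCD and excludes exactly these intermediate cases. (A consistency check: for $C_1$ the three possibilities --- both conditions fail, both hold, exactly one holds --- give $(q-2)^2+4+2\cdot 2(q-2)=q^2$, confirming that ``not self-dual'' strictly contains ``LCD'' here.) So your method is internally coherent and, carried through, would prove a \emph{different} product formula with first factors $(q-2)^2$ and $(q^{2e_i}-q^{e_i}-1)^2$; as a proof of the theorem as written it has a genuine gap, since two of the three factors you derive contradict the statement. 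You should either exhibit why your unit condition is equivalent to ``not self-dual'' on those constituents (it is not), or flag that the enumeration in the statement needs the correction your computation suggests.
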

\begin{proof}
Like in the self-dual case, we can count the number of LCD double circulant codes $C$ by considering each constituent of $C$. Note that if a double circulant code of length 2
over some extension of $R$ is not self-dual, then it is LCD
 because of having single-row generator matrix. Combined with the results of Theorem 3.1, the total number of LCD double circulant
constituent codes $C_1$ over $R$ is $q^2-4$ and the total number of LCD double circulant constituent codes $C_i$ over $R_{2e_i}$ is $q^{4e_i}-(q^{e_i}+1)^2$.

In the last case about reciprocal pairs, note that $h_j(x)$ and $h^*_j(x)$ of degree $d_j$ leads to counting pairs of codes of length $2$ over $R_{d_j}$ with
certain intersection properties \cite{QCCD}. Let $(1,a)$ and $(1,b)$ be the generators of $C_j^{\prime}$ and $C_j^{\prime\prime}$, respectively.
\begin{equation*}\label{den1}
 \small
\begin{cases}
 \emph{ }C_j^{\prime}\cap{C_j^{\prime\prime}}^{\perp}=\{0\}, \\
   \emph{ }C_j^{\prime\prime}\cap {C_j^{\prime}}^{\perp}=\{0\}.\\
\end{cases}\Longleftrightarrow
1+ab\in R_{d_j}^*.
\end{equation*}

Without loss of generality, we discuss on the unit character of $a$ as follows.
\begin{itemize}
\item If $a\in R^*_{d_j}$, then $b\in \frac{-1}{a}+R^*_{d_j}$ and $|\frac{-1}{a}+R^*_{d_j}|=|R^*_{d_j}|$. In this case, we have $|R^*_{d_j}|^2=(q^{d_j}-1)^4$ choices for $(a,b)$.

\item If $a\in R_{d_j}\backslash\{ R_{d_j}^*\cup\{0\}\}$, then $a=u\alpha$ or $a=(u-1)\alpha$. Let $b=\beta^{\prime}+u\beta^{\prime\prime}$, where $\alpha\in\F^*_{q^{d_j}}$ and
$ \beta^{\prime}, \beta^{\prime\prime}\in\F_{q^{d_j}}.$ When $a=u\alpha$, we then have
$$1+ab=1+u\alpha(\beta^{\prime}+u\beta^{\prime\prime})=1+u\alpha(\beta^{\prime}+\beta^{\prime\prime})\in R^*_{d_j}.$$
This is equivalent to $\alpha(\beta^{\prime}+\beta^{\prime\prime})\neq -1$. There are $q^{d_j}(q^{d_j}-1)^2$ choices for $(\alpha,\beta^{\prime}, \beta^{\prime\prime}).$
When $a=(u-1)\alpha$, then we have $$1+ab=1+(u-1)\alpha(\beta^{\prime}+u\beta^{\prime\prime})=1+(u-1)\alpha\beta^{\prime}\in R^*_{d_j}.$$
This is equivalent to $\alpha\beta^{\prime}\neq 1$ and $\beta^{\prime\prime}$ is arbitrary in $\F_{q^{d_j}}$.
There are also $q^{d_j}(q^{d_j}-1)^2$ choices for $(\alpha,\beta^{\prime}, \beta^{\prime\prime}).$

\item
If $a$ is zero, then $b$ is arbitrary in $R_{d_j}$. There are $q^{2d_j}$ choices for $b$.
\end{itemize}
 Hence, the number of the last case about reciprocal pairs is
 $(q^{d_j}-1)^4+q^{d_j}(q^{d_j}-1)^2+q^{d_j}(q^{d_j}-1)^2+q^{2d_j}
=q^{4d_j}-2q^{3d_j}+3q^{2d_j}-2q^{d_j}+1.$ The proof of the theorem is now completed.
\end{proof}
\subsection{\textbf{Double Negacirculant Codes of Even Length}}
For our purpose, we assume in this subsection that $n$ is an even integer, $q$ is an odd prime power, and that $\gcd(n,q)=1$  . We can cast the factorization of $x^n+1$ into distinct irreducible polynomials over $R$ in the form
$$x^n+1=\alpha\prod\limits_{i=1}^sg_i(x)\prod\limits_{j=1}^th_j(x)h_j^*(x),$$
where $\alpha\in R^*,$ $g_i(x)$ is a self-reciprocal polynomial with degree $2e_i$ for $1\leq i\leq s$, and $h_j^*(x)$ is the reciprocal polynomial of $h_j(x)$ with degree $d_j$ for $1\leq j\leq t$.
Using the same notations and arguments as above, we can easily carry out the results as follows:
 $$\frac{R[x]}{(x^n+1)}\simeq \left(\bigoplus_{i=1}^s R_{2e_i}\right)\oplus\left(\bigoplus_{j=1}^t R_{d_j}
  \oplus R_{d_j}\right),$$ and
  $$C\simeq \left(\bigoplus_{i=1}^s{C_i}\right)\oplus\left(\bigoplus_{j=1}^t({C_j^\prime}\oplus{C_j^{\prime\prime}})\right).$$

Similar to the proof of Theorem 3.1 and Theorem 3.2, we have the following two enumeration results. Their proofs are omitted.

 \begin{thm}\label{a1}
Let $n$ denote a positive even integer, and $q$ a prime power coprime with $n$. Assume that the factorization of $x^n+1$ into irreducible polynomials over $R=\F_q+u\F_q$ is of the form
$$x^n+1=\alpha\prod\limits_{i=1}^sg_i(x)\prod\limits_{j=1}^th_j(x)h_j^*(x),$$
with $\alpha\in R^*$, $n=\sum\limits_{i=1}^s2e_i+2\sum\limits_{j=1}^td_j$. Then the total number of self-dual double negacirculant codes over $R$ is
$$\prod\limits_{i=1}^{s}\left(1+q^{e_i}\right)^2\prod\limits_{j=1}^t(q^{d_j}-1)^2.$$
\end{thm}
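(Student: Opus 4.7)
The plan is to mirror the proof of Theorem 3.1, exploiting the CRT decomposition of $R[x]/(x^n+1)$ already exhibited in this subsection. A double negacirculant code of length $2n$ over $R$ corresponds to a length-$2$ linear code over $R[x]/(x^n+1)$, and by CRT it splits as a direct sum of its constituents $C_i$ over $R_{2e_i}$ (one for each self-reciprocal factor $g_i$) and pairs $(C_j',C_j'')$ over $R_{d_j}$ (one for each reciprocal pair $(h_j,h_j^*)$). The standard quasi-twisted translation (as in \cite{LS2,Y}) converts Euclidean self-duality of the full code into Hermitian self-duality on each $C_i$ (with conjugation $\overline{z_1+uz_2}=z_1^{q^{e_i}}+uz_2^{q^{e_i}}$) and into the dual-pair condition $(C_j'')=(C_j')^{\perp}$ on each reciprocal pair. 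The enumeration therefore reduces to a product over constituents.

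The first thing to note is that, in contrast with Theorem 3.1, the polynomial $x^n+1$ has no linear factor over $R$: since $n$ is even and $q$ is odd, $1^n+1=2\neq 0$ and $(-1)^n+1=2\neq 0$ in $\F_q\subset R$, so neither $x-1$ nor $x+1$ divides $x^n+1$. Consequently the CRT decomposition contains no trivial $R$-summand, and the prefactor $4$ appearing in Theorem 3.1 (from the four self-dual codes of length $2$ over $R$ generated by $(1,\pm\omega), (1,\pm\omega(1-2u))$) simply disappears here.

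For each self-reciprocal factor $g_i$ of degree $2e_i$, the count of Hermitian self-dual codes of length $2$ over $R_{2e_i}$ is handled exactly as in Theorem 3.1. A generator $(1,c)$ with $c=x+uy$ yields $1+c\bar c=0$, and using $u^2=u$ this decouples into the pair of equations $\mathrm{Norm}(x)=-1$ and $\mathrm{Norm}(x+y)=-1$ for the norm map $\F_{q^{2e_i}}\to\F_{q^{e_i}}$. By surjectivity of Norm with fibers of size $q^{e_i}+1$, each equation has $q^{e_i}+1$ solutions, giving $(q^{e_i}+1)^2$ ordered pairs $(x,y)$ and hence the factor $\prod_{i=1}^{s}(1+q^{e_i})^2$.

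For each reciprocal pair $(h_j,h_j^*)$ of degree $d_j$, I would count dual pairs $(C_j',C_j'')$ of length-$2$ codes over $R_{d_j}$ with generators $(1,a)$ and $(1,b)$ subject to $1+ab=0$. If $a\in R_{d_j}^{*}$, then $b=-a^{-1}$ is forced, contributing $|R_{d_j}^{*}|=(q^{d_j}-1)^2$ ordered pairs; if $a\notin R_{d_j}^{*}$, reducing modulo $u$ or modulo $u-1$ yields the contradiction $1=0$ in $\F_{q^{d_j}}$, so such $a$ give no solutions. This produces the factor $\prod_{j=1}^{t}(q^{d_j}-1)^2$. Multiplying the two families of contributions yields the stated total. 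The only mild obstacle is the initial translation step; once the CRT correspondence between Euclidean self-duality and the constituent-wise Hermitian/dual-pair conditions is in hand, the rest is a direct transcription of the arithmetic arguments of Theorem 3.1 with the linear-factor contribution removed.
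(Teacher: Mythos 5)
Your proof is correct and follows exactly the route the paper intends: the paper omits this proof, stating only that it is ``similar to the proof of Theorem 3.1 and Theorem 3.2,'' and your argument is precisely that transcription, with the correct observation that the absence of a linear factor of $x^n+1$ (since $q$ is odd and $n$ even) removes the prefactor $4$. The constituent counts $(q^{e_i}+1)^2$ for the Hermitian self-dual case and $(q^{d_j}-1)^2$ for the reciprocal pairs match the paper's arithmetic in Theorem 3.1.
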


 \begin{thm}\label{a1}
Keep the same notations as above, then the total number of LCD double negacirculant codes over $R$ is
$$\prod\limits_{i=1}^{s}(q^{4e_i}-(q^{e_i}+1)^2)\prod\limits_{j=1}^t(q^{4d_j}-2q^{3d_j}+3q^{2d_j}-2q^{d_j}+1).$$
\end{thm}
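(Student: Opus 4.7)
The plan is to mimic the proof of Theorem 3.2, substituting the negacirculant CRT decomposition displayed immediately above Theorem 3.3 for the circulant one. Since the Euclidean pairing is compatible with this decomposition (each self-reciprocal summand $R_{2e_i}$ is stabilized, with the induced pairing being Hermitian via the conjugation $\bar z=z_1^{q^{e_i}}+uz_2^{q^{e_i}}$ of \S 3.1, while each reciprocal pair $(R_{d_j},R_{d_j})$ gets swapped under duality), the code $C$ is LCD over $R$ if and only if every self-reciprocal constituent $C_i$ is Hermitian LCD and every reciprocal pair $(C_j',C_j'')$ satisfies
$$C_j'\cap (C_j'')^{\perp}=\{0\}\quad\text{and}\quad C_j''\cap (C_j')^{\perp}=\{0\}.$$
Hence the enumeration factors as a product of per-factor counts.

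For each self-reciprocal factor $g_i(x)$ of degree $2e_i$, the constituent $C_i$ is generated by a single row $(1,c)$ with $c\in R_{2e_i}$, so there are $|R_{2e_i}|=q^{4e_i}$ candidates in total. Invoking the dichotomy used in Theorem 3.2, namely that a single-row length-$2$ code is either Hermitian self-dual or Hermitian LCD, together with the self-dual count $(q^{e_i}+1)^2$ established in Theorem 3.3, yields $q^{4e_i}-(q^{e_i}+1)^2$ LCD constituents per such factor.

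For each reciprocal pair $h_j(x),h_j^*(x)$ of degree $d_j$, I would reproduce the case analysis of Theorem 3.2: writing $(1,a),(1,b)$ for the generators of $C_j',C_j''$, the two intersection conditions are jointly equivalent to $1+ab\in R_{d_j}^*$. Splitting on whether $a\in R_{d_j}^*$, $a\in (u)\setminus\{0\}$, $a\in (u-1)\setminus\{0\}$, or $a=0$, the subcounts are $(q^{d_j}-1)^4$, $q^{d_j}(q^{d_j}-1)^2$, $q^{d_j}(q^{d_j}-1)^2$ and $q^{2d_j}$ respectively, and they sum to $q^{4d_j}-2q^{3d_j}+3q^{2d_j}-2q^{d_j}+1$, exactly as in Theorem 3.2.

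Multiplying these per-factor contributions over all $i$ and $j$ yields the announced formula. Note that no factor analogous to the $q^2-4$ appearing in Theorem 3.2 enters here, because for $n$ even and $q$ odd the polynomial $x^n+1$ has no linear factor $x-1$, so there is no length-$2$ constituent over $R$ itself. The main obstacle to watch is simply the duality compatibility of the negacirculant CRT decomposition, but this is entirely parallel to the circulant case; once it is granted, the proof reduces to the two routine per-factor counts above.
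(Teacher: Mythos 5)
Your reconstruction matches the paper's intended argument exactly: the paper omits the proof of this theorem, saying only that it is ``similar to the proof of Theorem 3.1 and Theorem 3.2,'' and your proposal is precisely that adaptation --- the same CRT decomposition of $R[x]/(x^n+1)$, the same per-factor counts for the self-reciprocal and reciprocal-pair constituents, and the correct observation that the $q^2-4$ factor disappears because $x^n+1$ has no self-reciprocal linear factor. The only step worth flagging is the dichotomy ``not Hermitian self-dual $\Rightarrow$ Hermitian LCD'' for a single-row constituent over $R_{2e_i}$: since $R_{2e_i}$ is not a field, $1+c\bar{c}$ can be a nonzero zero divisor, in which case the constituent generated by $(1,c)$ is neither self-dual nor LCD, so the subtraction $q^{4e_i}-(q^{e_i}+1)^2$ overcounts; but you inherit this step verbatim from the paper's own proof of Theorem 3.2, so it is a shared weakness rather than a deviation from the paper's route.
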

\section{\textbf{ Distance Bound}}
\subsection{\textbf{ Distance Bound for Double Circulant Codes}}
Let $q$ be a primitive root mod $n$ and $n$ be an odd prime. Recall that $x^n-1=(x-1)(x^{n-1}+\ldots+x+1)=(x-1)h(x).$ Note that $h(x)$ is irreducible over $R,$ because
$\F_q$ is a subring of $R,$ and $h(x)$ is irreducible over $\F_q$.

 By the Chinese Remainder Theorem (CRT), we have
 \begin{eqnarray*}
  \frac{R[x]}{(x^n-1)}\simeq\frac{R[x]}{(x-1)}\oplus \frac{R[x]}{(h(x))}\simeq R\oplus\frac{\F_q[u,x]}{(u^2-u, h(x))}\simeq R\oplus\F_{q^{n-1}}+u\F_{q^{n-1}}.
\end{eqnarray*}
Let $\mathcal{R}$ denote the ring $\frac{R[x]}{(h(x))}$, so $R$ is a subring of $\mathcal{R}$. The nonzero codewords of the cyclic code of length $n$ generated by $h(x)$ is called {\bf constant vector}.
\begin{lem}
If the nonzero vector $z=(e,f)$ with $e$ not a constant vector, then there are at most $q^{n+1}$ generators $(1,a)$ such that $z\in C_a$ and $C_a$ is a double circulant code over $R.$
\end{lem}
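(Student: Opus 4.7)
The plan is to split the problem through the CRT decomposition $R[x]/(x^n-1)\cong R\oplus\mathcal{R}$ recalled just before the lemma. Writing $z=(e,f)$ as $(e_1,f_1)\oplus(e_2,f_2)$ with $(e_1,f_1)\in R^2$ and $(e_2,f_2)\in\mathcal{R}^2$, and identifying each $a$ with a pair $(a_1,a_2)\in R\times\mathcal{R}$ under the CRT, the double circulant code $C_a$ splits as the direct sum of its constituents $R\cdot(1,a_1)$ and $\mathcal{R}\cdot(1,a_2)$. Membership $z\in C_a$ then decouples constituent by constituent, so the count $N$ of admissible $a$'s factors as $N=N_1 N_2$, where $N_i$ is the number of $a_i$ for which $(e_i,f_i)$ lies in the $i$-th constituent.

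For $N_1$, the equation $(e_1,f_1)=g(1,a_1)$ forces $g=e_1$ and $e_1 a_1=f_1$; even in the worst case where $e_1=0=f_1$ and $a_1$ is free, this yields the crude bound $N_1\le |R|=q^{2}$. The heart of the argument is $N_2\le q^{n-1}$, and this is where the hypothesis enters. The cyclic code generated by $h(x)$ in $R[x]/(x^n-1)$ is the repetition code, whose nonzero codewords are exactly the constant vectors; so ``$e$ not a constant vector'' translates to $e\not\equiv 0\pmod{h(x)}$, i.e.\ $e_2\ne 0$ in $\mathcal{R}$. Identifying $\mathcal{R}\cong K\times K$ with $K=\F_{q^{n-1}}$ via $a+ub\mapsto(a,a+b)$, the equation $e_2 a_2=f_2$ becomes a coordinate pair $\epsilon_i\alpha_i=\phi_i$ in $K$. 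Since $e_2\ne 0$, at least one $\epsilon_i$ is nonzero and pins down the corresponding $\alpha_i$; the other coordinate of $a_2$ then ranges at worst over all of $K$, giving $N_2\le |K|=q^{n-1}$.

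Multiplying, $N\le q^{2}\cdot q^{n-1}=q^{n+1}$, which is the claimed bound. The main delicate point I foresee is the zero-divisor bookkeeping in each constituent: one must confirm that the worst case really is ``$e_1=0$ in $R$ together with $e_2$ a zero divisor having exactly one vanishing coordinate in $\mathcal{R}$'', rather than some intermediate configuration that might inflate one of the factors. This is handled by the CRT being a genuine ring isomorphism, so the constraints on $a_1$ and $a_2$ are decoupled and the worst-case multiplicative bound $q^{2}\cdot q^{n-1}$ is indeed valid.
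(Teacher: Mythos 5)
Your proposal is correct and follows essentially the same route as the paper: CRT-split $R[x]/(x^n-1)\cong R\oplus\mathcal{R}$, decouple the condition $f=ea$ into $f_1=e_1a_1$ and $f_2=e_2a_2$, bound the first factor by $q^2$ (worst case $e_1=0$), and use the non-constant hypothesis to force $e_2\neq 0$ and hence at most $q^{n-1}$ choices for $a_2$. The only cosmetic difference is that you handle the second constituent uniformly via the idempotent identification $\mathcal{R}\cong\F_{q^{n-1}}\times\F_{q^{n-1}}$, whereas the paper runs an explicit four-way case analysis on whether $e_2$ is a unit, lies in $(u)\setminus\{0\}$, lies in $(u-1)\setminus\{0\}$, or is zero; both give the same counts.
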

\begin{proof}
By the Chinese Remainder Theorem (CRT), $(e,f)\simeq(e_1,f_1)\oplus(e_2,f_2)$. Since $(e,f)\in C_a$, then $f=ea$, $f_1=e_1a_1$ and $f_2=e_2a_2$, where $e_1$, $f_1$, $a_1\in R$ and $e_2$, $f_2$, $a_2\in \mathcal{R}$. Let $a_1=a_1^{\prime}+ua_1^{\prime\prime}$, $a_2=a_2^{\prime}+ua_2^{\prime\prime}$, where $a_1^{\prime}, a_1^{\prime\prime}\in \F_q$ and $a_2^{\prime}, a_2^{\prime\prime}\in\F_{q^{n-1}}$.

In the first constituent of $C_a$, we discuss on the unit character of $e_1$ as follows.
 \begin{itemize}
\item If $e_1\in R^*$, there exists only one solution for $a_1=\frac{f_1}{e_1}$.

\item If $e_1\in (u)\backslash\{0\}$, then $e_1=ue_1^{\prime}$ and $f_1=uf_1^{\prime}$, where $e_1^{\prime}\in\F_q^*,$ $f_1^{\prime}\in\F_q$. Since $u^2=u$, then $f_1=uf_1^{\prime}=ue_1^{\prime}a_1=ue_1^{\prime}(a_1^{\prime}+ua_1^{\prime\prime})=ue_1^{\prime}a_1^{\prime}+ue_1^{\prime}a_1^{\prime\prime}
    \Longleftrightarrow\frac{f_1^{\prime}}{e_1^{\prime}}=a_1^{\prime}+a_1^{\prime\prime}$. There are $q$ choices for $a_1$.

\item If $e_1\in (u-1)\backslash\{0\}$, then $e_1=(u-1){e_1}^{\prime}$ and $f_1=(u-1)f_1^{\prime}$, where $e_1^{\prime}\in\F_q^*,$ $f_1^{\prime}\in\F_q$. Since $u^2=u$, then $f_1=(u-1)f_1^{\prime}=(u-1)e_1^{\prime}a_1=(u-1)e_1^{\prime}(a_1^{\prime}+ua_1^{\prime\prime})
    =(u-1)e_1^{\prime}a_1^{\prime}
    \Longleftrightarrow\frac{f_1^{\prime}}{e_1^{\prime}}=a_1^{\prime}$ and $a_1^{\prime\prime}$ is arbitrary in $\F_q$. There are $q$ choices for $a_1$.
\item If $e_1$=0, then $a_1$ is arbitrary in $R$, there are $q^2$ choices for $a_1$.
\end{itemize}
In the second constituent of $C_a$, we discuss on the unit character of $e_2$ as follows.
 \begin{itemize}
\item If $e_2\in \mathcal{R}^*$, there exists only one solution of $a_2$.

\item If $e_2\in (u)\backslash\{0\}$, then $e_2=ue_2^{\prime}$ and $f_2=uf_2^{\prime}$, where $e_2^{\prime}\in\F_{q^{n-1}}^*,$ $f_2^{\prime}\in\F_{q^{n-1}}$. Since $u^2=u$, then $f_2=uf_2^{\prime}=ue_2^{\prime}a_2=ue_2^{\prime}(a_2^{\prime}+ua_2^{\prime\prime})=ue_2^{\prime}a_2^{\prime}+ue_2^{\prime}a_2^{\prime\prime}
    \Longleftrightarrow\frac{f_2^{\prime}}{e_2^{\prime}}=a_2^{\prime}+a_2^{\prime\prime}$. There are $q^{n-1}$ choices for $a_2$.

\item If $e_2\in (u-1)\backslash\{0\}$, then $e_2=(u-1){e_2}^{\prime}$ and $f_2=(u-1)f_2^{\prime}$, where $e_2^{\prime}\in\F_{q^{n-1}}^*,$ $f_2^{\prime}\in\F_{q^{n-1}}$. Since $u^2=u$, then $f_2=(u-1)f_2^{\prime}=(u-1)e_2^{\prime}a_2=(u-1)e_2^{\prime}(a_2^{\prime}+ua_2^{\prime\prime})
    =(u-1)e_2^{\prime}a_2^{\prime}
    \Longleftrightarrow\frac{f_2^{\prime}}{e_2^{\prime}}=a_2^{\prime}$ and $a_2^{\prime\prime}$ is arbitrary in $\F_{q^{n-1}}$. There are $q^{n-1}$ choices for $a_2$.
\item If $e_2=0$, then $e\equiv0\mod h(x)$, then $e$ is a constant vector, contradiction.
\end{itemize}
Thus, there are at most $q^{n+1}$ generators $(1, a)$ such that $z\in C_a$.
\end{proof}
\begin{lem}
If  $z=(e,f)\in R^{2n}$ with $e$ is not a constant vector,
then there are at most $4(1+q^{\frac{n-1}{2}})$ generators $(1,a)$ such that $z\in C_a$ and $C_a$ is a self-dual double circulant code over $R.$
\end{lem}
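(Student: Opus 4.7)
The plan is to refine the previous lemma's count by imposing the self-duality constraint on the generator $(1,a)$, constituent by constituent. Using the CRT decomposition $R[x]/(x^n-1)\simeq R\oplus \mathcal{R}$ with $\mathcal{R}=\F_{q^{n-1}}+u\F_{q^{n-1}}$, write $z=z_1\oplus z_2$ with $z_i=(e_i,f_i)$, and $a=a_1\oplus a_2$. Since $q$ is primitive modulo the odd prime $n$, the polynomial $h(x)$ is self-reciprocal, so in the notation of Theorem \ref{a1} the factor $h$ plays the role of a $g_i$ with $2e_i=n-1$. Accordingly, $C_a$ is self-dual over $R$ if and only if $C_{a_1}$ is a self-dual length-$2$ code over $R$ (Euclidean) and $C_{a_2}$ is a self-dual length-$2$ code over $\mathcal{R}$ in the Hermitian sense associated with the involution $\overline{\phantom{x}}\colon z_1+uz_2\mapsto z_1^{q^{(n-1)/2}}+uz_2^{q^{(n-1)/2}}$. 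Likewise, $z\in C_a$ is equivalent to $f_i=e_ia_i$ for $i=1,2$.

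For the first constituent, the proof of Theorem \ref{a1} shows that there are exactly $4$ self-dual codes of length $2$ over $R$, so there are at most $4$ admissible values of $a_1$, whether or not $f_1=e_1a_1$ further restricts them. All the work is therefore in showing that the number of $a_2\in\mathcal{R}$ simultaneously satisfying the incidence relation $f_2=e_2a_2$ and the self-duality relation $1+a_2\overline{a_2}=0$ is at most $1+q^{(n-1)/2}$. Note that $e_2\neq 0$, since $e$ is not a constant vector.

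I split on the unit character of $e_2$, mirroring the previous lemma. If $e_2\in\mathcal{R}^*$ then $a_2=f_2/e_2$ is uniquely determined, so there is at most one admissible $a_2$. If $e_2=ue_2'$ with $e_2'\in\F_{q^{n-1}}^*$, writing $a_2=a_2'+ua_2''$ the relation $f_2=e_2a_2$ reduces to the single linear constraint $a_2'+a_2''=f_2'/e_2'$, where $f_2=uf_2'$. Writing the self-duality condition componentwise through the CRT decomposition $\mathcal{R}\simeq \F_{q^{n-1}}\times\F_{q^{n-1}}$ given by $a_2\mapsto (a_2',a_2'+a_2'')$ turns $1+a_2\overline{a_2}=0$ into the pair $\mathrm{Norm}(a_2')=-1$ and $\mathrm{Norm}(a_2'+a_2'')=-1$, exactly as in the proof of Theorem \ref{a1}. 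The second norm condition is then predetermined by the linear constraint (either holds or fails, independently of $a_2$); in the only favorable case the free parameter $a_2'$ ranges over the fiber of $-1$ under the surjective norm map $\F_{q^{n-1}}^*\to\F_{q^{(n-1)/2}}^*$, which has cardinality $1+q^{(n-1)/2}$. The case $e_2\in(u-1)\setminus\{0\}$ is entirely symmetric. Thus in every case there are at most $1+q^{(n-1)/2}$ valid $a_2$.

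Multiplying the two independent constituent bounds yields at most $4(1+q^{(n-1)/2})$ self-dual double circulant codes $C_a$ containing $z$. The main technical point, and the only place where anything beyond Theorem \ref{a1} and the previous lemma is needed, is the interaction between the zero-divisor structure of $\mathcal{R}$ and the Hermitian norm condition in the non-unit cases; that is what forces the counting to route through the two-factor norm identity rather than through a plain dimension count.
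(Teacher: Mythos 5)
Your proposal is correct and follows essentially the same route as the paper: the same CRT split into the $R$-constituent (bounded by the $4$ self-dual length-$2$ codes from Theorem 3.1) and the $\mathcal{R}$-constituent, the same case analysis on the unit character of $e_2$ (with $e_2\neq 0$ forced by $e$ not being a constant vector), and the same norm-fiber count of $1+q^{(n-1)/2}$ in the zero-divisor cases. Your remark that the second norm condition is predetermined by the linear constraint is in fact a slightly cleaner justification than the paper's own phrasing, but the argument is the same.
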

\begin{proof}
Keep the same notations as Lemma 4.1. In the first constituent of $C_a$, there are at most $4$ generators $(1,a_1)$ such that $C_1$ is self-dual double circulant code over $R$ due to Theorem 3.1.

In the second constituent of $C_a$, let $a_2=a_2^{\prime}+ua_2^{\prime\prime}$, where $a_2^{\prime},$ $a_2^{\prime\prime}\in\F_{q^{n-1}}$, we discuss on the unit character of $e_2$ as follows.
\begin{itemize}
\item If $e_2\in \mathcal{R}^*$, then there exists a unique solution of $a_2=\frac{f_2}{e_2}$.

\item If $e_2\in (u)\backslash\{0\}$, then $e_2=ue_2^{\prime}$, $f_2=uf_2^{\prime}$ and $$f_2=ua_2e_2^{\prime}=u(a_2^{\prime}+ua_2^{\prime\prime})e_2^{\prime}=u(a_2^{\prime}+a_2^{\prime\prime})e_2^{\prime}\Longleftrightarrow a_2^{\prime}+a_2^{\prime\prime}=\frac{f_2^{\prime}}{e_2^{\prime}},$$
     where $e_2^{\prime}\in \F^*_{q^{n-1}}$ and $f_2^{\prime}\in \F_{q^{n-1}}.$ Since $C_a$ is a self-dual double circulant code, then $(1,a_2)\cdot(1,a_2)=1+a_2\overline{a_2}=1+a_2{a_2}^{q^{\frac{n-1}{2}}}=0$.
         This is equivalent to
        \begin{equation*}\label{den1}
 \small
\begin{cases}
 \emph{ }a_2^{\prime}{a_2^{\prime}}^{q^{\frac{n-1}{2}}}=-1,\\
   \emph{ }(a_2^{\prime}+a_2^{\prime\prime})({a_2^{\prime}}^{q^{\frac{n-1}{2}}}+{a_2^{\prime\prime}}^{q^{\frac{n-1}{2}}})=-1.\\
\end{cases}\Longleftrightarrow
\begin{cases}
 \emph{ }Norm(a_2^{\prime})=-1,\\
   \emph{ }Norm(a_2^{\prime}+a_2^{\prime\prime})=-1,\\
\end{cases}
\end{equation*}
where $Norm$ is a map from $\F_{q^{n-1}}$ to $\F_{q^{\frac{n-1}{2}}}$. So there are $q^{\frac{n-1}{2}}+1$ roots for $Norm(a_2^{\prime})=-1$ and $q^{\frac{n-1}{2}}+1$ roots for $Norm(a_2^{\prime}+a_2^{\prime\prime})=-1$. But $a_2^{\prime\prime}$ is determined by $a_2^{\prime}$, then there are at most $1+q^{\frac{n-1}{2}}$ choices for $a_2^{\prime}$, hence for $a_2$.
\item If $e_2\in (u-1)\backslash\{0\}$, then $e_2=(u-1)e_2^{\prime}$ and $f_2=(u-1)f_2^{\prime}$, where $e_2^{\prime}\in \F^*_{q^{n-1}}$ and $f_2^{\prime}\in \F_{q^{n-1}}$, which implies that $$f_2=(u-1)a_2e_2^{\prime}=(u-1)(a_2^{\prime}+ua_2^{\prime\prime})e_2^{\prime}\Longleftrightarrow a_2^{\prime}=\frac{f_2^{\prime}}{e_2^{\prime}}.$$ Since $C_a$ is a self-dual double circulant code, using the same procedure as above, we have
\begin{equation*}\label{den1}
 \small
\begin{cases}
 \emph{ }Norm(a_2^{\prime})=-1,\\
   \emph{ }Norm(a_2^{\prime}+a_2^{\prime\prime})=-1.\\
\end{cases}
\end{equation*}
Hence, there are $1+q^{\frac{n-1}{2}}$ choices for $a_2$ because $a_2^{\prime}=\frac{f_2^{\prime}}{e_2^{\prime}}$ and $1+q^{\frac{n-1}{2}}$ roots for $a_2^{\prime\prime}$.
 \item If $e_2=0$, then $e\equiv0\mod h(x)$, then $e$ is a constant vector, contradiction.
\end{itemize}

 So there are at most $4(1+q^{\frac{n-1}{2}})$ generators $(1,a)$ such that $z\in C_a.$ The proof is done.
\end{proof}

In number theory, Artin's conjecture on primitive roots states that a given integer $q$ which is neither a perfect square nor $-1$ is a primitive root modulo
infinitely many primes $l$ \cite{M}. This was proved conditionally under the Generalized Riemann Hypothesis (GRH) by Hooley \cite{H}. Hence we can get infinite families
of double circulant codes $C(2n)$ over $R$ where the analysis made for $x^n-1$ with only two irreducible factors applies.

Recall the $q$-ary entropy function defined for $0\leq t\leq\frac{q-1}{q}$ by
\begin{equation*}\label{den1}
H_q(t)=\begin{cases}
 \emph{ }0,  ~~~~~~~~~~~~~~~~~~~~~~~~~~~~~~~~~~~~~~~~~~~~~~~~~~~~~~~~~~~~~~~~{\rm{if}}~~ t=0,\\
   \emph{ }t{\rm{log}}_q(q-1)-t{\rm{log}}_q(t)-(1-t){\rm{log}}_q(1-t), ~~~~~~~~~~~{\rm{if}}~~0<t\leq\frac{q-1}{q}. \\
\end{cases}
\end{equation*}
This quantity is instrumental in the estimation of the volume of high-dimensional Hamming balls when the base field is $\mathbb{F}_q$.
The result we are using is that the volume of the Hamming ball of radius $tn$ is asymptotically equivalent, up to subexponential terms, to $q^{nH_q(t)}$, when
$0<t<1$, and $n$ goes to infinity \cite[Lemma 2.10.3]{W}.
The main  result obtained in this paper is as follows.

\begin{thm}\label{lem2}
Let $n$ be an odd prime, $n>q,$ and $q$ be a primitive root modulo $n.$ The family of Gray images of self-dual double  circulant codes over $R$ of length $2n,$ of relative distance $\delta,$ and rate $1/2,$
satisfies $H_q(\delta)\geq \frac{1}{8}$.
The family of Gray images of LCD double  circulant codes over $R$ of length $2n,$ of relative distance $\delta,$ and rate $1/2,$
satisfies $H_q(\delta)\geq \frac{1}{4}$. In particular, both families codes are good.
\end{thm}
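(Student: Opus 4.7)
The plan is a Gilbert--Varshamov style double-counting argument, carried out in parallel for the self-dual and LCD families of double circulant codes $C_a=\langle(1,a)\rangle$. Because $n$ is an odd prime and $q$ is a primitive root modulo $n$, the factorization $x^n-1=(x-1)h(x)$ has $h$ self-reciprocal of degree $n-1$, so Theorems 3.1 and 3.2 specialize to $T_{\mathrm{SD}}=4(1+q^{(n-1)/2})^2\sim 4q^{n-1}$ self-dual double circulant codes of length $2n$ and $T_{\mathrm{LCD}}=(q^2-4)\bigl(q^{2(n-1)}-(q^{(n-1)/2}+1)^2\bigr)\sim q^{2n}$ LCD ones. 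Since $|C_a|=q^{2n}$, the Gray image of $C_a$ has length $4n$ and $\F_q$-dimension $2n$, hence rate $1/2$.

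Next I would estimate the number of ``bad'' codewords, i.e.\ nonzero $z\in R^{2n}$ whose Gray image has Hamming weight at most $4n\delta$. Because $\phi$ is a bijection, this count equals the volume $V_\delta$ of the Hamming ball of radius $4n\delta$ in $\F_q^{4n}$, and by the cited entropy estimate $V_\delta=q^{4nH_q(\delta)+o(n)}$. I would then split $z=(e,f)$ according to whether $e$ is a constant vector. If $e=c\mathbf 1$ with $c\neq 0$ in $R$, then $\phi(e,f)$ contains $n$ copies of $\phi(c)\neq 0$, so the relative weight of $(e,f)$ is at least $1/4$; a direct computation shows $H_q(1/4)>1/4$ for every prime power $q\geq 2$, so at both target thresholds $H_q(\delta)\leq 1/4$ the inequality $\delta<1/4$ holds and no bad codeword has constant $e$. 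Thus Lemmas 4.1 and 4.2 control every bad codeword: each lies in at most $N_{\mathrm{LCD}}=q^{n+1}$ double circulant codes and at most $N_{\mathrm{SD}}=4(1+q^{(n-1)/2})$ self-dual ones.

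A union bound then closes the argument. Double-counting pairs $(C_a,z)$ with $z$ a nonzero low-weight codeword of $C_a$ gives $\#\{\text{bad SD codes}\}\leq V_\delta\,N_{\mathrm{SD}}$ and $\#\{\text{bad LCD codes}\}\leq V_\delta\,N_{\mathrm{LCD}}$. Demanding each right-hand side to be strictly less than the corresponding total $T_{\mathrm{SD}}$ or $T_{\mathrm{LCD}}$, taking $\log_q$, and letting $n$ tend to infinity through the infinite set of primes furnished by Artin's conjecture yield the asymptotic conditions $H_q(\delta)<1/8-o(1)$ and $H_q(\delta)<1/4-o(1)$ respectively. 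Taking the $\liminf$ of the relative distances realized in this way and using continuity of $H_q^{-1}$ delivers the claimed $H_q(\delta)\geq 1/8$ and $H_q(\delta)\geq 1/4$; with rate fixed at $1/2$ and $\delta>0$, both families are good. The main obstacle is the constant-vector case, since Lemmas 4.1--4.2 provide no bound there; the argument handles it only because the elementary inequality $H_q(1/4)>1/4$ pushes the working $\delta$ strictly below $1/4$, so that no constant-vector codeword can qualify as bad.
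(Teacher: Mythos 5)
Your proof follows essentially the same Gilbert--Varshamov-style expurgation argument as the paper: the same asymptotic counts $\Omega_n$ from Theorems 3.1--3.2, the same per-codeword bounds $\lambda_n$ from Lemmas 4.1--4.2, and the same entropic estimate of the ball volume, yielding $H_q(\delta_0)=\tfrac18$ and $\tfrac14$ in the two cases. The one place you go beyond the paper's write-up is your explicit disposal of codewords with constant $e$ via the elementary inequality $H_q(1/4)>1/4$ (the paper's lemmas exclude constant vectors and its proof of the theorem passes over this case in silence), which is a correct and worthwhile addition.
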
\begin{proof}
Let $\Omega_n$ denote the size of the family. Thus, for $n\rightarrow \infty,$ we have, by Theorem 3.1, $\Omega_n \sim 4 q^{n-1}$ for self-dual double circulant codes, and by Theorem 3.2,
$\Omega_n \sim q^{2n-2}$ for LCD double circulant codes.
Assume we can prove that for $n$ large enough
$\Omega_n>\lambda_n B(d_n)$, where $B(r)$ denotes the number of vectors in $R^{2n}$ with Hamming weight of their $\F_q$ image $<r.$
Here $\lambda_n=4(1+q^{(n-1)/2})$ for self-dual codes, and $\lambda_n=q^{n+1}$ for LCD codes.

This would imply, by Lemmas 4.1 and 4.2, that there are codes of length $2n$ in the family with minimum Hamming distance of their $\F_q$ image $\ge d_n.$ Denote by
$\delta$ the relative distance of this family
of $q$-ary codes.

If we take $d_n$ the largest number satisfying $\Omega_n>\lambda_n B(d_n)$, and assume a growth of the form
$d_n\sim 4 \delta_0 n,$ and thus $\Omega_n \sim \lambda_n B(d_n),$ for $n \rightarrow \infty$ then, using an entropic estimate for $B(d_n)\sim q^{4nH_q(\delta_0)}$ \cite[Lemma 2.10.3]{W} yields, with the said values of $\Omega_n$ and $\lambda_n$ the estimate
  $H_q(\delta_0)=\frac{1}{8}$ for self-dual codes and $H_q(\delta_0)=\frac{1}{4}$ for LCD codes. The result follows by observing that, by the definition of $\delta,$ we have $\delta\ge \delta_0.$
\end{proof}

\subsection{\textbf{ Distance Bound for Double Negacirculant Codes}}
\subsubsection{\textbf{Factorization of $x^n+1$}}
In order to describe the factorization of $x^n+1$ over $R$, where $n=2^a$, we first recall the definition of {\bf Dickson polynomials}.
For $\alpha\in R$, we define the Dickson polynomial of paramater $\alpha$ and degree $n$ as
$$D_n(x,\alpha)=\sum\limits_{j=0}^{\lfloor n/2\rfloor}\frac{n}{n-j}{n-j \choose j}(-\alpha)^jx^{n-2j}.$$
The complete factorization of $x^{2^n}+1$ over $\mathbb{F}_q$ with $q\equiv 3~(\rm{mod}~4)$ is given in the following theorem \cite{HM}.
\begin{thm}\label{tem}\cite{HM} Let $q\equiv 3~({mod}~4)$, where $q=2^Am-1, A\geq 2$ and $m$ is an odd integer. Let $n\geq2$,
\begin{enumerate}
  \item [(a)] if $n<A,$ then $x^{2^n}+1$ is the product of $2^{n-1}$ irreducible trinomials over $\mathbb{F}_q$ $$x^{2^n}+1=\prod\limits_{\gamma\in\Gamma}(x^2+\gamma x+1),$$ where $\Gamma$ is the set of all roots of $D_{2^{n-1}}(x,1)$.
  \item [(b)] if $n\geq A,$ then $x^{2^n}+1$ is the product of $2^{A-1}$ irreducible trinomials over $\mathbb{F}_q$ $$x^{2^n}+1=\prod\limits_{\delta\in\triangle}(x^{2^{n-A+1}}+\delta x^{2^{n-A}}-1),$$ where $\triangle$ is the set of all roots of $D_{2^{A-1}}(x,-1)$.
\end{enumerate}
\end{thm}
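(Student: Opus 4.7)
The plan is to derive both parts from the standard functional equation
$$D_k\!\left(y+\alpha y^{-1},\alpha\right)=y^{k}+\alpha^{k}y^{-k},$$
which makes the roots of $D_k(x,\alpha)$ explicit in a quadratic extension and pairs each root with a canonical quadratic factor. Throughout I would write $q=2^{A}m-1$ with $m$ odd and $A\ge 2$, so that the $2$-part of $\F_q^{\times}$ has order exactly $2$, a fact used repeatedly to rule out extraneous elements of $\F_q$.

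For part (a), I take $\alpha=1$. The roots of $D_{2^{n-1}}(x,1)$ in the algebraic closure are $\gamma=y+y^{-1}$ with $y^{2^{n}}=-1$, paired by $y\leftrightarrow y^{-1}$; this involution is fixed-point free (a fixed point would need $y=\pm 1$, but $(\pm 1)^{2^n}=1$), giving $2^{n-1}$ orbits. Since $x^{2}+\gamma x+1=(x+y)(x+y^{-1})$ and $2^{n}$ is even, both $-y,-y^{-1}$ are roots of $x^{2^n}+1$; multiplying over $\gamma\in\Gamma$ exhausts the $2^{n}$ roots and produces the claimed factorization over the closure. To descend to $\F_q$, I use $n<A$ to conclude $q\equiv -1\pmod{2^{n+1}}$, hence $y^{q}=y^{-1}$ and $\gamma^{q}=\gamma$. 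Irreducibility of $x^{2}+\gamma x+1$ over $\F_q$ follows because its roots have $2$-power order strictly greater than $2$, while $\F_q^{\times}$ has Sylow-$2$ subgroup of order $2$.

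For part (b), I set $\alpha=-1$ and make the substitution $z=x^{2^{n-A}}$, which collapses the target identity to $z^{2^{A}}+1=\prod_{\delta}(z^{2}+\delta z-1)$. The trinomial factors as $(z+y)(z-y^{-1})$ with $\delta=y-y^{-1}$, and both roots $-y,y^{-1}$ satisfy $w^{2^{A}}=-1$; the involution $y\mapsto -y^{-1}$ on the set $\{y:y^{2^{A}}=-1\}$ is again fixed-point free (a fixed point forces $y^{2}=-1$, giving $y^{2^{A}}=1\ne -1$), producing $2^{A-1}$ pairings. The functional equation at $k=2^{A-1}$ (even, since $A\ge 2$) gives $D_{2^{A-1}}(\delta,-1)=y^{2^{A-1}}+y^{-2^{A-1}}=0$ because $y^{2^{A-1}}$ is a square root of $-1$, confirming $\delta$ is a root of $D_{2^{A-1}}(x,-1)$. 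The descent to $\F_q$ now uses $q\equiv 2^{A}-1\pmod{2^{A+1}}$ (since $m$ is odd), which gives $y^{q}=-y^{-1}$ and therefore $\delta^{q}=\delta$.

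The main obstacle is the irreducibility over $\F_q$ of the full trinomial $x^{2^{n-A+1}}+\delta x^{2^{n-A}}-1$, not just its quadratic shadow in $z$. The strategy is to pin down the degree of a root exactly. A root $x$ with $x^{2^{n-A}}\in\{-y,y^{-1}\}$ inherits order $2^{n+1}$ from the order $2^{A+1}$ of $y$ (the order must divide $2^{n+1}$, and cannot divide $2^{n}$ without forcing $y^{2^{A}}=1$). Lifting the exponent yields $v_{2}(q^{2^{j}}-1)=A+j$ for $j\ge 1$, so the multiplicative order of $q$ modulo $2^{n+1}$ is $2^{n-A+1}$ for $n\ge A$. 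Hence $[\F_q(x):\F_q]=2^{n-A+1}$, matching the degree of the trinomial and forcing irreducibility. The global degree tally $2^{A-1}\cdot 2^{n-A+1}=2^{n}=\deg(x^{2^{n}}+1)$ confirms the factorization is complete.
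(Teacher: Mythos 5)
This statement is quoted verbatim from Meyn \cite{HM} and the paper offers no proof of it, so there is nothing internal to compare your argument against; the only question is whether your argument is sound, and it is. Your reduction of part (a) to the parametrization $\gamma=y+y^{-1}$ with $y^{2^{n}}=-1$, the fixed-point-free involution count giving $|\Gamma|=2^{n-1}$, and the rationality of $\gamma$ via $q\equiv-1\pmod{2^{n+1}}$ are all correct, as is the irreducibility argument from $v_2(q-1)=1$. In part (b), the substitution $z=x^{2^{n-A}}$, the identity $z^2+\delta z-1=(z+y)(z-y^{-1})$ with $\delta=y-y^{-1}$, the computation $q\equiv 2^A-1\pmod{2^{A+1}}$ giving $y^q=-y^{-1}$ and hence $\delta^q=\delta$, and the lifting-the-exponent computation $v_2(q^{2^j}-1)=A+j$ forcing $\mathrm{ord}_{2^{n+1}}(q)=2^{n-A+1}$ all check out; since every root of the trinomial has multiplicative order exactly $2^{n+1}$, its minimal polynomial has degree $2^{n-A+1}$, which pins the trinomial as irreducible, and the degree tally $2^{A-1}\cdot 2^{n-A+1}=2^{n}$ closes the factorization. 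This is essentially the standard route (Meyn's original argument is organized around the transform $f(x)\mapsto x^{\deg f}f(x+x^{-1})$ applied recursively, whereas you work directly with roots of unity and a single order computation, which is arguably more self-contained); I see no gaps.
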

\begin{ex} If $q=3$, i.e., $q\equiv 3~(mod~4)$, then $q=2^2\cdot 1-1$. This implies $A=2,m=1,$ and $D_2(x,-1)=0$, i.e., $\triangle=\{1,2\}$. By Theorem 4.4 $$x^{2^n}+1=(x^{2^{n-1}}+ x^{2^{n-2}}-1)(x^{2^{n-1}}+2x^{2^{n-2}}-1).$$
\end{ex}
\begin{ex} If $q=11$, i.e., $q\equiv 3~(mod~4)$, then $q=2^2\cdot 3-1$. This implies $A=2,m=3,$ and $D_2(x,-1)=0$, i.e., $\triangle=\{3,8\}$. By Theorem 4.4 $$x^{2^n}+1=(x^{2^{n-1}}+3 x^{2^{n-2}}-1)(x^{2^{n-1}}+8x^{2^{n-2}}-1).$$
\end{ex}
\begin{ex} If $q=7$, i.e., $q\equiv 3~(mod~4)$, then $q=2^3\cdot 1-1$. This implies $A=3,m=1,$ and $D_4(x,-1)=0$, i.e., $\triangle=\{1,3,4,6\}$. By Theorem 4.4 $$x^{2^n}+1=(x^{2^{n-2}}+ x^{2^{n-3}}-1)(x^{2^{n-2}}+3x^{2^{n-3}}-1)(x^{2^{n-2}}+4x^{2^{n-3}}-1)(x^{2^{n-2}}+6x^{2^{n-3}}-1).$$
\end{ex}
Next, we need the analogous factorization theorem when $q\equiv 1~(\rm{mod}~4)$ as follows.
\begin{thm}\label{tem1}\cite{A2} Let $q\equiv 1~({mod}~4)$, where $q=2^{A+1}m+1, A\geq 1, m$ is an odd integer. Denote the set of all primitive $2^k$-th roots of unity in $\mathbb{F}_q$ by $U_k$. If $n\geq2$, then
\begin{enumerate}
  \item [(a)] if $n\leq A,$ then ord$_{2^{n+1}}(q)=1$, $x^{2^n}+1$ is the product of $2^{n}$ linear factors over $\mathbb{F}_q$ $$x^{2^n}+1=\prod\limits_{u\in U_{n+1}}(x+u).$$
  \item [(b)] if $n\geq A+1,$ then ord$_{2^{n+1}}(q)=2^{n-A}$, $x^{2^n}+1$ is the product of $2^{A}$ irreducible binomials over $\mathbb{F}_q$ of degree $2^{n-A}$ $$x^{2^n}+1=\prod\limits_{u\in U_{A+1}}(x^{2^{n-A}}+u).$$
\end{enumerate}
\end{thm}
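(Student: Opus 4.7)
The plan is to reduce the theorem to a computation of the multiplicative order $r := \mathrm{ord}_{2^{n+1}}(q)$ together with a matching-of-degrees argument. The starting observation is that an element $\alpha \in \overline{\mathbb{F}_q}$ satisfies $\alpha^{2^n} = -1$ if and only if $\alpha$ is a primitive $2^{n+1}$-th root of unity; so the $2^n$ roots of $x^{2^n}+1$ are precisely these primitive roots. Their splitting field over $\mathbb{F}_q$ is $\mathbb{F}_{q^r}$, and the minimal polynomial over $\mathbb{F}_q$ of any such root has degree exactly $r$.

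The arithmetic heart of the argument is the computation of $r$ by a $2$-adic valuation calculation. Writing $q - 1 = 2^{A+1}m$ with $m$ odd, I would apply the lifting-the-exponent lemma to $q^r - 1$: for odd $r$ one has $v_2(q^r - 1) = v_2(q-1) = A+1$, while for even $r$ the hypothesis $q \equiv 1 \pmod 4$ yields $v_2(q+1) = 1$, giving $v_2(q^r - 1) = A + 1 + v_2(r)$. The smallest $r$ with $v_2(q^r - 1) \geq n+1$ is therefore $r = 1$ when $n \leq A$, and $r = 2^{n-A}$ when $n \geq A+1$, as claimed.

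Case (a) then follows at once: when $n \leq A$, all $2^n$ primitive $2^{n+1}$-th roots of unity lie in $\mathbb{F}_q$ (since $2^{n+1} \mid q-1$), i.e., they form the set $U_{n+1}$; closure of $U_{n+1}$ under $u \mapsto -u$ for $n \geq 1$ produces the displayed factorization. For case (b), I would fix $u \in U_{A+1} \subset \mathbb{F}_q$ and check that any root $\alpha$ of $x^{2^{n-A}}+u$ satisfies $\alpha^{2^n} = (-u)^{2^A} = u^{2^A} = -1$, since $u$ is a primitive $2^{A+1}$-th root of unity. The $2^A$ polynomials $x^{2^{n-A}}+u$ for $u \in U_{A+1}$ have pairwise disjoint root sets, so their product has degree $2^A \cdot 2^{n-A} = 2^n = \deg(x^{2^n}+1)$ and equals $x^{2^n}+1$. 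Irreducibility of each factor is then a matching-of-degrees statement: every root is a primitive $2^{n+1}$-th root of unity whose $\mathbb{F}_q$-minimal polynomial has degree $r = 2^{n-A} = \deg(x^{2^{n-A}}+u)$.

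The main obstacle I anticipate is the $2$-adic valuation computation, where the parity-split formula for $v_2(q^r - 1)$ and the essential use of the hypothesis $q \equiv 1 \pmod 4$ to pin down $v_2(q+1) = 1$ must be handled carefully. Once the value of $\mathrm{ord}_{2^{n+1}}(q)$ is in hand, the remaining steps amount to a routine degree count and the observation that distinct $u \in U_{A+1}$ yield coprime factors.
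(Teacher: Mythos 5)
Your proposal is correct. Note that the paper itself offers no proof of this statement: it is quoted verbatim from reference \cite{A2} (and the companion factorization for $q\equiv 3 \pmod 4$ from \cite{HM}), so there is no in-paper argument to compare against. Your self-contained argument is the standard one and all the steps check out: the roots of $x^{2^n}+1$ are exactly the primitive $2^{n+1}$-th roots of unity, so each irreducible factor has degree $r=\mathrm{ord}_{2^{n+1}}(q)$; the lifting-the-exponent computation $v_2(q^r-1)=A+1$ for odd $r$ and $v_2(q^r-1)=A+1+v_2(r)$ for even $r$ (using $v_2(q+1)=1$, which is where $q\equiv 1\pmod 4$ enters) pins down $r=1$ for $n\le A$ and $r=2^{n-A}$ for $n\ge A+1$; and the degree count plus disjointness of the root sets of the binomials $x^{2^{n-A}}+u$, $u\in U_{A+1}$, forces both the factorization and the irreducibility of each factor. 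The two small sign points are also handled correctly: $U_{n+1}$ is closed under $u\mapsto -u$ because $n+1\ge 2$, and $(-u)^{2^A}=u^{2^A}=-1$ because $A\ge 1$ and $u$ has order $2^{A+1}$.
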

\begin{ex} If $q=5$, i.e., $q\equiv 1~(mod~4)$, then $q=2^2\cdot 1+1$. This implies $A=1,m=1,$ and $U_2=\{2,3\}$. By Theorem \ref{tem1} $$x^{2^n}+1=(x^{2^{n-1}}+2)(x^{2^{n-1}}+3).$$
\end{ex}
\begin{ex} If $q=13$, i.e., $q\equiv 1~(mod~4)$, then $q=2^2\cdot 3+1$. This implies $A=1,m=3,$ and $U_2=\{5,8\}$. By Theorem \ref{tem1} $$x^{2^n}+1=(x^{2^{n-1}}+5)(x^{2^{n-1}}+8).$$
\end{ex}
\begin{ex} If $q=41$, i.e., $q\equiv 1~(mod~4)$, then $q=2^3\cdot 5+1$. This implies $A=2,m=5,$ and $U_3=\{3,14,27,38\}$. By Theorem \ref{tem1} $$x^{2^n}+1=(x^{2^{n-2}}+3)(x^{2^{n-2}}+14)(x^{2^{n-2}}+27)(x^{2^{n-2}}+38).$$
\end{ex}

{\bf Remark:} There is an error in the case (b) of Theorem 4 in \cite{A}, i.e., ``$x^{2^n}+1=\prod\limits_{\gamma\in\triangle}(x^{{n-A+1}}+\delta x^{{n-A}}-1)$" should be changed to ``$x^{2^n}+1=\prod\limits_{\gamma\in\triangle}(x^{2^{n-A+1}}+\delta x^{2^{n-A}}-1)$". According to \cite{HM}, there is another error of Theorem 6 in \cite{A}, ``$q=2^{A}m+1, A\geq 2$" should be changed to ``$q=2^{A+1}m+1, A\geq 1$".

So we can cast the factorization $x^n+1$ over $\F_q$ into two or four irreducible polynomials by limiting the size of $\triangle$ and $U$ in Theorems 4.4 and 4.8.
This factorization carries over $R$
because $\F_q$ is subring of $R$.

The polynomial $x^n+1$ factors into two irreducible polynomials that are reciprocal of each other over $R$, this is the case if $q\equiv \pm 1$~(mod~$4)$, where $q=2^2m\pm 1, m$ odd, it happens if $q=3, 5, 11, 13, 19, 27, 29, 37,$ etc.
The polynomial $x^n+1$ factors into four irreducible polynomials which are pairwise reciprocal of each other over $R$, this is the case if $q\equiv \pm 1$~(mod~$4)$, where $q=2^3m\pm 1, m$ odd, it happens if $q=7, 23, 25, 41,$ etc.
\subsubsection{\textbf{Distance Bounds for Decomposition I and II }}
\noindent\textbf{Decomposition I:} If $x^n+1=h(x)h^*(x)$, where $h(x)$ and $h^*(x)$ are irreducible polynomials and reciprocal of each other,
where $\deg (h(x))=\frac{n}{2}$, then $R[x]/(h(x))\simeq R[x]/(h^*(x))\simeq\F_{q^{\frac{n}{2}}}+u\F_{q^{\frac{n}{2}}}$. For convenience,
$K_1=R[x]/(h(x))$ and $K_2=R[x]/(h^*(x))$. We have provided several examples of that situation, such as Examples 4.5, 4.6, 4.9 and 4.10.
\begin{lem}\label{lem1} Let $q$ be odd, $n$ be a power of $2$, $x^n+1$ with \textbf{Decomposition I}.
If $z=(e,f)\in R^{2n}$ is a nonzero vector, then there are at most $q^\frac{3n}{2}$ generators $(1,a)$ such that $z\in C_a$ and $C_a$ is a double negacirculant code over $R$.
\end{lem}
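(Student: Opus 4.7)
The plan is to exploit the CRT decomposition $R[x]/(x^n+1) \simeq K_1 \oplus K_2$ supplied by Decomposition I, where each $K_i \simeq \F_{q^{n/2}}+u\F_{q^{n/2}}$, reducing the membership condition $z\in C_a$ to a pair of independent scalar equations over $K_1$ and $K_2$. Concretely, writing $(e,f)\leftrightarrow (e_1,f_1)\oplus(e_2,f_2)$ and $a\leftrightarrow (a_1,a_2)$ under CRT, the requirement that $(1,a)$ generate a code containing $z$ is equivalent to $f_i = e_i a_i$ for $i=1,2$. Since the two equations involve disjoint variables, the total number of admissible $a$ is the product of the per-constituent counts, and it suffices to bound each factor separately.

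Next I would run the same case analysis on the unit character of $e_i$ in $K_i$ as was done for the second constituent in the proof of Lemma 4.1. Write $a_i = a_i'+u a_i''$ with $a_i',a_i''\in\F_{q^{n/2}}$. If $e_i\in K_i^*$, then $a_i = f_i/e_i$ is unique, giving $1$ choice. If $e_i = u e_i'$ with $e_i'\in\F_{q^{n/2}}^*$, then the identity $u^2=u$ collapses $f_i=e_i a_i$ to a single linear constraint on $a_i'+a_i''$, leaving $q^{n/2}$ choices; the case $e_i=(u-1)e_i'$ is entirely symmetric and also yields $q^{n/2}$ choices. Finally, if $e_i=0$, the equation forces $f_i=0$ (otherwise there is no solution), after which $a_i$ ranges freely over all of $K_i$, giving $q^n$ choices.

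To conclude I would combine the two constituents to maximize the product. Since $z\neq 0$, at least one $(e_i,f_i)$ is nonzero; moreover, whenever $e_i=0$ we have just observed that $f_i=0$ must also hold. The optimum configuration is therefore to have exactly one constituent with $(e_i,f_i)=(0,0)$ (contributing the free count $q^n$) and the other with $e_j$ a nonzero zero divisor (contributing the next-largest count $q^{n/2}$), for a total of $q^n\cdot q^{n/2}=q^{3n/2}$. The only other candidate for the maximum, in which both $e_i$ are nonzero zero divisors, yields $q^{n/2}\cdot q^{n/2}=q^n<q^{3n/2}$, and all remaining configurations are strictly smaller. The main obstacle is precisely this bookkeeping: one must carefully separate the unit, zero-divisor, and zero sub-cases in each constituent and check that the product cannot exceed $q^{3n/2}$. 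Once those sub-case counts are in hand, the combination over the two constituents is immediate from the independence afforded by CRT.
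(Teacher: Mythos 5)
Your proposal is correct and follows essentially the same route as the paper: CRT decomposition into the two constituents $K_1,K_2$, a case analysis on whether $e_i$ is a unit, a nonzero element of $(u)$ or $(u-1)$, or zero, and then multiplying the per-constituent counts, with the worst case being one zero constituent ($q^n$ choices) paired with a nonzero zero-divisor constituent ($q^{n/2}$ choices). Your added observation that $e_i=0$ forces $f_i=0$ for solvability is a slight sharpening of the paper's remark that $e_1,e_2$ cannot both vanish, but the argument and the bound are the same.
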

\begin{proof}
By the Chinese Remainder Theorem (CRT), $(e,f)=(e_1,f_1)\oplus(e_2,f_2)$. Since $(e,f)\in C_a$, then $f=ea$, $f_1=e_1a_1$ and $f_2=e_2a_2$, where $e_1$, $f_1$, $a_1\in K_1$ and $e_2$, $f_2$, $a_2\in K_2$. Let $a_1=a_1^{\prime}+ua_1^{\prime\prime}$, $a_2=a_2^{\prime}+ua_2^{\prime\prime}$, where $a_1^{\prime}$, $a_1^{\prime\prime}$, $a_2^{\prime},$ $a_2^{\prime\prime}\in\F_{q^{\frac{n}{2}}}$.

In the first constituent of $C_a$, we discuss on the unit character of $e_1$ as follows.
 \begin{itemize}
\item If $e_1\in K_1^*$, there exists only one solution for $a_1=\frac{f_1}{e_1}$.

\item If $e_1\in (u)\backslash\{0\}$, then $e_1=ue_1^{\prime}$ and $f_1=uf_1^{\prime}$, where $e_1^{\prime}\in\F^*_{q^{\frac{n}{2}}},$ $f_1^{\prime}\in\F_{q^{\frac{n}{2}}}$. Since $u^2=u$, then $f_1=uf_1^{\prime}=ue_1^{\prime}a_1=ue_1^{\prime}(a_1^{\prime}+ua_1^{\prime\prime})=ue_1^{\prime}a_1^{\prime}+ue_1^{\prime}a_1^{\prime\prime}
    \Longleftrightarrow\frac{f_1^{\prime}}{e_1^{\prime}}=a_1^{\prime}+a_1^{\prime\prime}$. There are $q^{\frac{n}{2}}$ choices for $a_1$.

\item If $e_1\in (u-1)\backslash\{0\}$, then $e_1=(u-1){e_1}^{\prime}$ and $f_1=(u-1)f_1^{\prime}$, where $e_1^{\prime}\in\F^*_{q^{\frac{n}{2}}}$, $f_1^{\prime}\in\F_{q^{\frac{n}{2}}}$. Since $u^2=u$, then $f_1=(u-1)f_1^{\prime}=(u-1)e_1^{\prime}a_1=(u-1)e_1^{\prime}(a_1^{\prime}+ua_1^{\prime\prime})
    =(u-1)e_1^{\prime}a_1^{\prime}
    \Longleftrightarrow\frac{f_1^{\prime}}{e_1^{\prime}}=a_1^{\prime}$ and $a_1^{\prime\prime}$ is arbitrary in $\F_q$. There are ${q^{\frac{n}{2}}}$ choices for $a_1$.
\item If $e_1=0$, then $a_1$ is arbitrary in $K_1$, there are $q^n$ choices for $a_1.$
\end{itemize}
Using the same argument as above in the second constituent of $C_a$, there are also at most $q^{n}$ choices for $a_2$. But if $z$ is not zero, then $e_1$ and $e_2$ cannot both be zero. Hence there are only at most $q^{\frac{3n}{2}}$ generators $(1,a)$ such that $z\in C_a.$
We have thus proved the lemma.
\end{proof}
\begin{lem}\label{lem1} Let $q$ be odd, $n$ be a power of $2.$ Assume $x^n+1$ satisfies decomposition I.
 If $z=(e,f)\in R^{2n}$ is a nonzero vector,
then there are at most $q^{\frac{n}{2}}$ generators $(1,a)$ such that $z\in C_a$ and that $C_a$ is self-dual double negacirculant code of length $2n$ over $R$.
\end{lem}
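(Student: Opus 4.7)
The plan is to mirror the case analysis of Lemma 4.5 but exploit the self-duality constraint to reduce the count by one full factor. Writing $z=(e,f)$ in its CRT components $(e_1,f_1)\in K_1^2$, $(e_2,f_2)\in K_2^2$ and $a=(a_1,a_2)$ with $a_i\in K_i$, the factorization $x^n+1=h(x)h^*(x)$ is a single reciprocal pair, so by the reciprocal-pair analysis of Theorem 3.1 (carried over into the negacirculant enumeration in Theorem 3.3) self-duality of $C_a$ translates to $a_1\in K_1^*$, $a_2\in K_2^*$, and $1+a_1a_2=0$ under the identifications $K_1\simeq K_2\simeq R_{n/2}$. In particular $a_2=-a_1^{-1}$ is determined by $a_1$, so it suffices to bound the number of admissible $a_1\in K_1^*$.

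With $a_2$ eliminated, the condition $z\in C_a$ becomes the pair of equations $f_1=e_1a_1$ in $K_1$ and $a_1f_2+e_2=0$ in $K_2$. Let $N_1$ (resp.\ $N_2$) denote the number of units $a_1\in K_1^*$ satisfying the first (resp.\ second) equation; the desired count is at most $\min(N_1,N_2)$.

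I would then case-analyze each $N_i$ on the unit character of $e_i$, following Lemma 4.5 but restricted to units. If $e_i\in K_i^*$, then $a_i$ is forced, giving $N_i\leq 1$. If $e_i$ is a nonzero zero-divisor in $(u)$ or $(u-1)$, then writing $a_1=a_1'+ua_1''$ reduces the equation to one linear constraint over $\F_{q^{n/2}}$, and the unit conditions $a_1'\neq 0$ and $a_1'+a_1''\neq 0$ leave at most $q^{n/2}-1$ solutions. If $e_i=0$, the equation forces $f_i=0$ and $a_1$ ranges freely over $K_i^*$, giving $(q^{n/2}-1)^2$ choices, but then the other constituent must furnish the bound. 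Since $z\neq 0$ excludes $e_1=e_2=0$ (as $f_j=e_ja_j$), at least one of $e_1,e_2$ falls into the first two cases, whence $\min(N_1,N_2)\leq q^{n/2}-1<q^{n/2}$.

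The one delicate point is the identification $K_1\simeq K_2$ under which the self-duality equation $1+a_1a_2=0$ is to be read: it arises from the relation $h^*(x)=x^{n/2}h(1/x)$ exactly as in Theorem 3.1. Once this identification is fixed, the two membership equations in $K_1$ and $K_2$ are independent and the minimum bound above applies, completing the proof.
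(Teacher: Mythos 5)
Your proof is correct and follows essentially the same route as the paper's: self-duality forces $a_1,a_2$ to be units with $a_2=-a_1^{-1}$ determined by $a_1$, and the number of admissible $a_1$ is then bounded by a case analysis on the unit character of a nonzero constituent of $e$. Your use of $\min(N_1,N_2)$ and the slightly sharper count $q^{n/2}-1$ coming from the unit restriction are minor refinements of the paper's ``without loss of generality $e_1\neq 0$'' argument, not a different approach.
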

\begin{proof}
Keep the same notations as Lemma 4.12. Since $C_a$ is a self-dual code, then $\langle(1,a_1), (1,\\a_2)\rangle=1+a_1a_2=0$. Clearly, $a_1,a_2$ are units and $a_2$ is determined by $a_1$. Since $e_1$ and $e_2$ can not both be zero, without loss of generality, we may assume $e_1\neq0$, then we just need to consider the first constituent of $C_a$ by symmetry. Next, we discuss on the unit character of $e_1$ as follows.
 \begin{itemize}
\item If $e_1\in K_1^*$, there exists only one solution of $a_1=\frac{f_1}{e_1}$.

\item If $e_1\in (u)\backslash\{0\}$, then $e_1=ue_1^{\prime}$ and $f_1=uf_1^{\prime}$, where $e_1^{\prime}\in\F^{*}_{q^{\frac{n}{2}}}$ and $f_1^{\prime}\in\F_{q^{\frac{n}{2}}}$. Since $u^2=u$, then $f_1=uf_1^{\prime}=ue_1^{\prime}a_1=ue_1^{\prime}(a_1^{\prime}+ua_1^{\prime\prime})=ue_1^{\prime}a_1^{\prime}+ue_1^{\prime}a_1^{\prime\prime}
    \Longleftrightarrow\frac{f_1^{\prime}}{e_1^{\prime}}=a_1^{\prime}+a_1^{\prime\prime}$. There are $q^{\frac{n}{2}}$ choices for $a_1$.

\item If $e_1\in (u-1)\backslash\{0\}$, then $e_1=(u-1){e_1}^{\prime}$ and $f_1=(u-1)f_1^{\prime}$, where $e_1^{\prime}\in\F^*_{q^{\frac{n}{2}}}$ and $f_1^{\prime}\in\F_{q^{\frac{n}{2}}}$. Since $u^2=u$, then $f_1=(u-1)f_1^{\prime}=(u-1)e_1^{\prime}a_1=(u-1)e_1^{\prime}(a_1^{\prime}+ua_1^{\prime\prime})
    =(u-1)e_1^{\prime}a_1^{\prime}
    \Longleftrightarrow\frac{f_1^{\prime}}{e_1^{\prime}}=a_1^{\prime}$ and $a_1^{\prime\prime}$ is arbitrary in $\F_q$. There are ${q^{\frac{n}{2}}}$ choices for $a_1$.
\end{itemize}
 Hence there are at most $q^{\frac{n}{2}}$ generators $(1,a)$ such that $z\in C_a.$ We have thus proved the lemma.
\end{proof}
\noindent\textbf{Decomposition II:} If $x^n+1=u_1(x)u_1^*(x)u_2(x)u_2^*(x)$, where $u_1(x)~(resp.~u_2(x))$ and $u_1^*(x)~(resp.~u^*_2(x))$ are irreducible polynomials
and reciprocal of each other, $\deg (u_i(x))=q^{\frac{n}{4}}, i=\{1,2\}$, then
$R[x]/(u_1(x))\simeq R[x]/(u_1^*(x))\simeq R[x]/(u_2(x))\simeq R[x]/(u_2^*(x))\simeq\F_{q^{\frac{n}{4}}}+u\F_{q^{\frac{n}{4}}}$.
For convenience, let $I_1=R[x]/(u_1(x))$, $I_2=R[x]/(u_1^*(x))$, $I_3=R[x]/(u_2(x))$ and $I_4=R[x]/(u_2^*(x))$. We have provided several examples of that situation, such as Examples 4.7 and 4.11.
\begin{lem}\label{lem1} Let $q$ be odd, $n$ be a power of $2$, and $x^n+1$ satisfying Decomposition II .
If $z=(e,f)\in R^{2n}$ is a nonzero vector, then there are at most $q^\frac{7n}{4}$ generators $(1,a)$ such
that $z\in C_a$ and $C_a$ is a double negacirculant code of length $2n$ over $R$.
\end{lem}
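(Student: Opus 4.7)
The plan is to transcribe the argument of Lemma 4.12 almost verbatim, but with the two-summand CRT decomposition replaced by the four-summand one dictated by Decomposition II. Under the isomorphism $R[x]/(x^n+1)\simeq I_1\oplus I_2\oplus I_3\oplus I_4$, write $(e,f)=\bigoplus_{i=1}^{4}(e_i,f_i)$ with $(e_i,f_i)\in I_i^2$, and $a=\bigoplus_{i=1}^{4}a_i$ with $a_i\in I_i$. The membership $(e,f)\in C_a$ is then equivalent to the four independent equations $f_i=e_i a_i$ in $I_i$, so the total count of admissible generators factors as a product of four local counts, one per index $i$.

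For each local count I would rerun the unit-character analysis of Lemma 4.12 inside the ring $I_i\simeq \F_{q^{n/4}}+u\F_{q^{n/4}}$. Writing $a_i=a_i'+u a_i''$ with $a_i',a_i''\in\F_{q^{n/4}}$, four subcases occur according to the ideal-theoretic position of $e_i$: if $e_i$ is a unit, then $a_i=f_i/e_i$ is the unique solution; if $e_i=u e_i'$ with $e_i'\neq 0$, the equation reduces (assuming $f_i\in(u)$, otherwise no solution exists) to $a_i'+a_i''=f_i'/e_i'$ and yields $q^{n/4}$ solutions; if $e_i=(u-1)e_i'$ with $e_i'\neq 0$, one gets $a_i'=f_i'/e_i'$ with $a_i''$ free, again $q^{n/4}$ solutions; and if $e_i=0$, a solution exists only when $f_i=0$, in which case $a_i$ is arbitrary, giving $|I_i|=q^{n/2}$ solutions.

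To finish, observe that $z\neq 0$ forces at least one pair $(e_j,f_j)$ to be nonzero. If such a $j$ has $e_j=0$ and $f_j\neq 0$, the corresponding local count vanishes and the bound is trivial; otherwise there is some index $j$ with $e_j\neq 0$, contributing at most $q^{n/4}$ to the product, while each of the remaining three indices contributes at most $q^{n/2}$, attained only in the subcase $e_i=0,\,f_i=0$. Multiplying these local bounds yields $q^{n/4}\cdot(q^{n/2})^3=q^{7n/4}$, as claimed. The only step beyond a direct transcription of Lemma 4.12 is the combinatorial bookkeeping at the end, namely that at most three of the four slots can simultaneously achieve the maximum $q^{n/2}$; I foresee no serious obstacle.
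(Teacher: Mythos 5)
Your proposal is correct and follows essentially the same route as the paper: CRT into the four constituents $I_1,\dots,I_4$, the same unit/$(u)$/$(u-1)$/zero case analysis giving local counts $1$, $q^{n/4}$, $q^{n/4}$, $q^{n/2}$, and the observation that at least one slot contributes at most $q^{n/4}$, yielding $q^{n/4}\cdot(q^{n/2})^3=q^{7n/4}$. You are in fact slightly more careful than the paper at the end, explicitly handling the degenerate case where some $e_j=0$ but $f_j\neq 0$ (where the count is zero), a case the paper's phrasing glosses over.
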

\begin{proof}
By the Chinese Remainder Theorem (CRT), $(e,f)=(e_1,f_1)\oplus(e_2,f_2)\oplus(e_3,f_3)\oplus(e_4,f_4)$. Since $(e,f)\in C_a$, then $f=ea$, $f_1=e_1a_1$, $f_2=e_2a_2$, $f_3=e_3a_3$ and $f_4=e_4a_4$, where $e_1$, $f_1$, $a_1\in I_1$, $e_2$, $f_2$, $a_2\in I_2$, $e_3$, $f_3$, $a_3\in I_3$ and $e_4$, $f_4$, $a_4\in I_4$.

In the first constituent of $C_a$, Let $a_1=a_1^{\prime}+ua_1^{\prime\prime}$, where $a_1^{\prime}$, $a_1^{\prime\prime}\in\F_{q^{\frac{n}{4}}}$, we discuss on the unit character of $e_1$ as follows.
 \begin{itemize}
\item If $e_1\in I_1^*$, then there exists only one solution for $a_1$ that is $a_1=\frac{f_1}{e_1}$.

\item If $e_1\in (u)\backslash\{0\}$, then $e_1=ue_1^{\prime}$ and $f_1=uf_1^{\prime}$, where $e_1^{\prime}\in\F^*_{q^{\frac{n}{4}}}$ and $f_1^{\prime}\in\F_{q^{\frac{n}{4}}}$. Since $u^2=u$, then $f_1=uf_1^{\prime}=ue_1^{\prime}a_1=ue_1^{\prime}(a_1^{\prime}+ua_1^{\prime\prime})=ue_1^{\prime}a_1^{\prime}+ue_1^{\prime}a_1^{\prime\prime}
    \Longleftrightarrow\frac{f_1^{\prime}}{e_1^{\prime}}=a_1^{\prime}+a_1^{\prime\prime}$. There are $q^{\frac{n}{4}}$ choices for $a_1$.

\item If $e_1\in (u-1)\backslash\{0\}$, then $e_1=(u-1){e_1}^{\prime}$ and $f_1=(u-1)f_1^{\prime}$, where $e_1^{\prime}\in\F^*_{q^{\frac{n}{4}}}$ and $f_1^{\prime}\in\F_{q^{\frac{n}{4}}}$. Since $u^2=u$, then $f_1=(u-1)f_1^{\prime}=(u-1)e_1^{\prime}a_1=(u-1)e_1^{\prime}(a_1^{\prime}+ua_1^{\prime\prime})
    =(u-1)e_1^{\prime}a_1^{\prime}
    \Longleftrightarrow\frac{f_1^{\prime}}{e_1^{\prime}}=a_1^{\prime}$, while $a_1^{\prime\prime}$ is arbitrary in $\F_q$. Thus there are ${q^{\frac{n}{4}}}$ choices for $a_1$.
\item If $e_1=0$, then $a_1$ is arbitrary in $I_1$, there are $q^{\frac{n}{2}}$ choices for $a_1.$
\end{itemize}
Using the same argument as above in the three remaining constituents of $C_a$. There are also at most $q^{\frac{n}{2}}$ choices for $a_i$, where $i=2,3,4.$ But $z$ is not zero, then $e_1$, $e_2$, $e_3$ and $e_4$ can not be zero at the same time. Hence there are at most $q^{\frac{3n}{2}}\times q^{\frac{n}{4}}$ generators $(1,a)$ such that $z\in C_a.$ We have thus proved the lemma.
\end{proof}
\begin{lem}\label{lem1} Let $q$ be odd, $n$ be a power of $2$, and $x^n+1$ satisfying Decomposition II.
If $z=(e,f)\in R^{2n}$ is a nonzero vector, then there are at most $q^{\frac{3n}{4}}$ generators $(1,a)$
such that $z\in C_a,$ and $C_a$ is a self-dual double negacirculant code over $R$.
\end{lem}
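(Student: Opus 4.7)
The plan is to extend the argument of Lemma 4.13 from two constituents to the four constituents arising in Decomposition II. I first decompose via CRT: write $(e,f) = \bigoplus_{i=1}^{4}(e_i,f_i)$ and the putative generator $a = \bigoplus_{i=1}^{4} a_i$ with $e_i,f_i,a_i \in I_i$, and $a_i = a_i'+u a_i''$ with $a_i',a_i''\in\F_{q^{n/4}}$. Since $u_1,u_1^*$ are a reciprocal pair and $u_2,u_2^*$ another, the self-duality condition on $C_a$ splits into the two coupled relations $1+a_1a_2 = 0$ and $1+a_3a_4 = 0$. These force all four $a_i$ to be units, with $a_2 = -a_1^{-1}$ and $a_4 = -a_3^{-1}$; so the generator is actually determined by the pair $(a_1,a_3)\in I_1^*\times I_3^*$, and the containment $z\in C_a$ amounts to the four equations $f_i = e_i a_i$ (where the equations for $i=2,4$ translate into $-e_2 = f_2 a_1$ and $-e_4 = f_4 a_3$).

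Next, since $z\neq 0$, at least one $e_i$ is nonzero. By the symmetric roles of the two reciprocal pairs, I may assume without loss of generality that some $e_i \neq 0$ lies in the first pair, say $e_1\neq 0$ (the case where the nonzero entry is in pair 2 is identical). I then run the same three-way case split on $e_1$ as in Lemma 4.13 (unit, in $(u)\setminus\{0\}$, in $(u-1)\setminus\{0\}$), with $\F_{q^{n/2}}$ replaced everywhere by $\F_{q^{n/4}}$: if $e_1$ is a unit, $a_1$ is uniquely determined; if $e_1 = u e_1'$ or $e_1 = (u-1)e_1'$, the equation $f_1 = e_1 a_1$ determines one of $a_1'+a_1''$ or $a_1'$ from $f_1'/e_1'$ while the other coordinate is free, giving at most $q^{n/4}$ choices for $a_1$.

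For the second pair $(a_3,a_4)$, I bound the number of choices for $a_3\in I_3^*$. If either of $e_3,e_4$ is nonzero, the same three-way case analysis yields at most $q^{n/4}$ choices; the unique worst case is $e_3 = f_3 = e_4 = f_4 = 0$, in which $a_3$ ranges over all of $I_3^*$, giving $|I_3^*| = (q^{n/4}-1)^2 \le q^{n/2}$ choices. Multiplying, the total number of admissible generators is at most $q^{n/4}\cdot q^{n/2} = q^{3n/4}$.

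The only real obstacle is the bookkeeping: one must verify that the worst case is genuinely when one pair is entirely constrained by a nonzero coordinate of $z$ while the other pair is entirely free, and check that the ``free'' pair contributes only $|I_j^*|\le q^{n/2}$ (not the full $|I_j| = q^{n/2}$ scaled up by a larger factor), so that the two bounds combine multiplicatively to exactly $q^{3n/4}$. The rest is a routine adaptation of Lemmas 4.12 and 4.13.
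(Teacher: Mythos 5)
Your proposal is correct and follows essentially the same route as the paper: CRT into four constituents, the self-duality relations $1+a_1a_2=0$ and $1+a_3a_4=0$ reducing the count to choices of $(a_1,a_3)$, at most $q^{n/4}$ choices for the constituent carrying a nonzero coordinate of $z$ (via the unit/$(u)$/$(u-1)$ case split of Lemma 4.13), and at most $q^{n/2}$ for the other pair, giving $q^{3n/4}$. Your explicit treatment of the worst case $e_3=f_3=e_4=f_4=0$ (where $a_3$ is only constrained to lie in $I_3^*$) is a detail the paper leaves implicit, but the argument is the same.
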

\begin{proof}Keep the same notations as Lemma 4.14. Since $C_a$ is a self-dual code, we then have
\begin{equation*}\label{den1}
 \small
 \begin{cases}
 \emph{ }\langle(1,a_1), (1,a_2)\rangle=0,\\
   \emph{ }\langle(1,a_3),(1,a_4)\rangle=0.\\
\end{cases}\Longleftrightarrow
\begin{cases}
 \emph{ }1+a_1a_2=0,\\
   \emph{ }1+a_3a_4=0.\\
\end{cases}
\end{equation*}So we just need to consider $a_1$ and $a_3$, because $a_2$ and $a_4$ are determined by $a_1$ and $a_3$, respectively. Since $z$ is not zero, then we may assume $e_1\neq0$.
Using a similar argument as in the proof of Lemma 4.13, we see that there are only at most $q^{\frac{n}{4}}$ choices for $a_1$ and $q^{\frac{n}{2}}$ choices for $a_3$. Hence, there are at most $q^{\frac{3n}{4}}$ generators such that $z\in C_a.$
\end{proof}
The following results can be proved by the same method as employed in the last subsection.
\begin{thm} Under the condition of the \textbf{Decomposition I} ($resp.$ \textbf{Decomposition II}), if $q$ is an odd prime power, and $n$ is a power of $2$, then the family of Gray images of self-dual double  negacirculant codes over $R$ of length $2n,$ of relative distance $\delta,$ and rate $1/2,$
satisfies $H_q(\delta)\geq \frac{1}{8}$ ($resp.$ $H_q(\delta)\geq \frac{1}{16}$). In particular, these families of codes are good.
\end{thm}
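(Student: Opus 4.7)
The plan is to replicate, mutatis mutandis, the random-coding argument used for Theorem 4.3 (the double circulant case). Let $\Omega_n$ denote the number of self-dual double negacirculant codes of length $2n$ over $R$. Specializing Theorem 3.3 to each decomposition: Decomposition I (with $s=0$, $t=1$, $d_1=n/2$) gives $\Omega_n = (q^{n/2}-1)^2 \sim q^n$, while Decomposition II (with $s=0$, $t=2$, $d_1=d_2=n/4$) gives $\Omega_n = (q^{n/4}-1)^4 \sim q^n$. So in both cases $\Omega_n \sim q^n$ at the exponential scale.

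Next I would invoke Lemmas 4.13 and 4.15 to bound, for every nonzero vector $z \in R^{2n}$, the number $\lambda_n$ of self-dual generators $(1,a)$ with $z \in C_a$, obtaining $\lambda_n = q^{n/2}$ in Decomposition I and $\lambda_n = q^{3n/4}$ in Decomposition II. Letting $B(d_n)$ denote the number of $z \in R^{2n}$ whose Gray image has Hamming weight strictly less than $d_n$, a union bound yields at most $\lambda_n B(d_n)$ families containing some codeword of Gray-weight $< d_n$. Hence taking $d_n$ to be the largest integer with $\Omega_n > \lambda_n B(d_n)$ guarantees the existence of at least one code in the family whose $\F_q$-image has minimum distance $\geq d_n$.

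Finally, since the Gray map $\phi : R^{2n} \to \F_q^{4n}$ is a bijection, $B(d_n)$ coincides with the volume of the Hamming ball of radius $d_n$ in $\F_q^{4n}$. Writing $d_n \sim 4\delta_0 n$, the entropic estimate $B(d_n) \sim q^{4n H_q(\delta_0)}$ from \cite[Lemma 2.10.3]{W} applies. Equating exponents in $\Omega_n \sim \lambda_n B(d_n)$ yields $1 = \tfrac12 + 4 H_q(\delta_0)$ under Decomposition I and $1 = \tfrac34 + 4 H_q(\delta_0)$ under Decomposition II, so $H_q(\delta_0) = \tfrac18$ and $H_q(\delta_0) = \tfrac{1}{16}$ respectively. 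Since $\delta \geq \delta_0$ by construction and $H_q$ is monotone on $[0,(q-1)/q]$, the stated lower bounds on $H_q(\delta)$ follow. Goodness is then immediate: the rate is $1/2 > 0$, and $H_q(\delta) > 0$ forces $\delta > 0$.

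The proof is conceptually routine once the earlier lemmas are in hand; the one point to watch is the exponent bookkeeping. The main (minor) obstacle is matching each decomposition's count from Theorem 3.3 with the correct $\lambda_n$ from the right lemma, and checking that the hypothesis "$z$ nonzero" in Lemmas 4.13 and 4.15 is strong enough to cover the union bound uniformly over all nonzero $z$, which is ensured by the symmetry argument already embedded in those lemmas.
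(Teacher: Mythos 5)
Your proposal is correct and follows essentially the same route as the paper, which itself only sketches this proof by deferring to the method of Theorem 4.3 with $\Omega_n\sim q^n$ and $\lambda_n=q^{n/2}$ (resp.\ $q^{3n/4}$) from Lemmas 4.13 and 4.15; your exponent bookkeeping $1=\tfrac12+4H_q(\delta_0)$ and $1=\tfrac34+4H_q(\delta_0)$ matches the claimed bounds. One small point in your favour: the paper's sketch cites Theorem 3.4 (the LCD count) for $\Omega_n$, which appears to be a typo for Theorem 3.3, the self-dual count you correctly invoke.
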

\begin{proof}
The proof follows the method of the proof of Theorem 4.3 with $\Omega_n\sim q^{n}$ for both Decomposition I and Decomposition II, by Theorem 3.4 and $\lambda_n=q^{\frac{n}{2}}$ and $\lambda_n=q^{\frac{3n}{4}}$ by Lemmas 4.13 and 4.15, respectively. The details are omitted.
\end{proof}
\begin{thm} Under the condition of the \textbf{Decomposition I} ($resp.$ \textbf{Decomposition II}), if $q$ is an odd prime power, and $n$ is a power of $2$, then the family of Gray images of LCD double negacirculant codes over $R$ of length $2n,$ of relative distance $\delta,$ and rate $1/2,$
satisfies $H_q(\delta)\geq \frac{1}{8}$ ($resp.$ $H_q(\delta)\geq \frac{1}{16}$). In particular, these families of codes are good.
\end{thm}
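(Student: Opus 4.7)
\emph{Proof plan.} The plan is to parallel exactly the random-coding argument of Theorem 4.3 (and the sketch of Theorem 4.16), with two substitutions: the family size $\Omega_n$ is now the LCD count given by Theorem 3.4, while the incidence bound $\lambda_n$ is drawn from Lemmas 4.12 and 4.14 rather than from the self-dual Lemmas 4.13 and 4.15. The latter substitution is legitimate because every LCD double negacirculant code is, a fortiori, a double negacirculant code, so a bound on the number of generators $(1,a)$ such that a fixed nonzero vector $z$ lies in $C_a$ that is valid in the ambient family also bounds the LCD subfamily.

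First I would specialize Theorem 3.4 to each decomposition. Under Decomposition I ($t=1,\ d_1=n/2$), the formula reads $\Omega_n = q^{2n} - 2q^{3n/2} + 3q^n - 2q^{n/2}+1 \sim q^{2n}$. Under Decomposition II ($t=2,\ d_1=d_2=n/4$), the same polynomial factor appears squared, yielding again $\Omega_n \sim q^{2n}$. Simultaneously Lemma 4.12 gives $\lambda_n=q^{3n/2}$ for Decomposition I and Lemma 4.14 gives $\lambda_n=q^{7n/4}$ for Decomposition II.

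Next I would run the pigeonhole step: let $B(r)$ count those $z\in R^{2n}$ whose Gray image has Hamming weight strictly less than $r$. The inequality $\Omega_n>\lambda_n B(d_n)$ guarantees an LCD double negacirculant code of length $2n$ whose Gray image has minimum Hamming distance at least $d_n$. Choosing $d_n$ maximal with this property and writing $d_n\sim 4\delta_0 n$, the entropic estimate $B(d_n)\sim q^{4nH_q(\delta_0)}$ of \cite[Lemma 2.10.3]{W} gives, after matching leading exponents, $H_q(\delta_0)=1/8$ for Decomposition I (from $2n-3n/2=n/2=4nH_q(\delta_0)$) and $H_q(\delta_0)=1/16$ for Decomposition II (from $2n-7n/4=n/4=4nH_q(\delta_0)$). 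Since the Gray map $\phi$ preserves LCDness (Theorem 2.2), doubles the length, and preserves rate $1/2$, and since $\delta\ge \delta_0$ by construction while $H_q$ is increasing near the origin, the announced lower bounds on $H_q(\delta)$ follow; goodness is then immediate from $\rho=1/2$ and $\delta>0$.

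I do not foresee a genuine obstacle: the calculation is mechanical once the correct $(\Omega_n,\lambda_n)$ pair is identified. The point worth highlighting, compared to the self-dual case, is that the loss from using the coarser Lemmas 4.12 and 4.14 in place of 4.13 and 4.15 (a factor of $q^n$ in $\lambda_n$ in both decompositions) is exactly compensated by the gain in $\Omega_n$ (also a factor of $q^n$, from self-dual to LCD), so the exponents $1/8$ and $1/16$ coincide numerically with those obtained in Theorem 4.16.
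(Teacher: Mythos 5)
Your proposal is correct and follows essentially the same route as the paper: the paper's proof is precisely the Theorem 4.3 argument run with $\Omega_n\sim q^{2n}$ from Theorem 3.4 and $\lambda_n=q^{3n/2}$, $q^{7n/4}$ from Lemmas 4.12 and 4.14, yielding the exponents $1/8$ and $1/16$ exactly as you compute. Your explicit remark that the general (non-self-dual) counting lemmas suffice because they bound the LCD subfamily a fortiori is the right justification, and is left implicit in the paper.
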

\begin{proof}
The proof follows the method of the proof of Theorem 4.3 with $\Omega_n\sim q^{2n}$ for both Decomposition I and Decomposition II, by Theorem 3.4 and $\lambda_n=q^\frac{3n}{2}$ and $\lambda_n=q^\frac{7n}{4}$ by Lemmas 4.12 and 4.14, respectively. The details are omitted.
\end{proof}
\section{Numerical Examples}

We need the following lemma, before doing any computation.
{\lem If $C$ has generator matrix $G=(I,A),$ with $A=A_1+uA_2,$ where $A_1,A_2$ are $q$-ary matrices of order $n$, and $I$ denotes the identity matrix of order $n,$ then $\phi(C)$ has generator matrix
$$\begin{pmatrix} -I&I&-A_1-A_2& A_1+A_2\\0&2I&-A_2&2A_1+A_2 \end{pmatrix}.$$

}

\begin{proof}
The first row of the first matrix is $\phi(uG),$ and its second row is $\phi(G).$ The matrix spans subsets of $\phi(C)$. Since the rank is $2n,$ the result follows.
\end{proof}
In Table \ref{Table:1} and Table \ref{Table:2}, we have provided some examples of LCD and self-dual double circulant codes respectively with the best parameters obtained by Magma search and Lemma 5.1. The coefficients of degree $n$ polynomial $a_1(x)$ and $a_2(x)$ in $\F_5[x]$ are written in decreasing powers of $x.$ For example for
$n=3,$ the entry $311$ means $3x^2+x+1.$
The parameters over $\F_5$ are given in the form $[4n,2n,d]$ where $d$ is the minimum distance.
The entry in the rightmost column is the best known distance of a $[4n,2n]$ (self-dual) code over $\F_5$, obtained by looking up the tables in \cite{WWW0} and \cite{WWW}. When the distance of the code constructed reaches that value the parameters are starred.
\begin{table}\caption{ Gray image of LCD double circulant codes over $\F_5+u\F_5$, $^*$: optimal codes}
$$
\begin{array}{|c|l|l|c|c|}
\hline
n&\textbf{$a_1(x)$}&\textbf{$a_2(x)$}&\textbf{Parameters over }{ \F}_5&\textbf{Distance \cite{WWW0}}\\
\hline
2&
4 0&
4 2&
[8, 4, 4]^*&4\\
3&
1 2 1&
4 0 2&
[12, 6, 6]^*&6\\

4&
0 3 3 4&
3 2 4 2&
[16, 8, 6]&7\\

5&
4 3 0 3 0&
0 4 1 3 1&
[20, 10, 8]^*&8\\
6&
0 1 0 0 4 4&
1 3 2 2 0 2&
[24, 12, 8]&9\\
7&
1 4 0 2 1 2 4&
2 1 1 3 4 2 4&
[28, 14, 10]&11\\
8&
3 4 4 3 0 1 1 0&
2 4 0 2 3 1 2 1&
[32, 16, 11]^*&11\\
9&
0 3 3 3 0 2 1 2 2&
3 1 4 3 2 1 0 0 0&
[36, 18, 12]^*&12\\

\hline
\end{array}
$$\label{Table:1}
\end{table}

\begin{table}\caption{ Gray image of self-dual double circulant codes over $\F_5+u\F_5$, $^*$: optimal codes}
$$
\begin{array}{|c|l|l|c|c|}
\hline
n&\textbf{$a_1(x)$}&\textbf{$a_2(x)$}&\textbf{Parameters over }{ \F}_5&\textbf{Distance \cite{WWW}}\\
\hline
2&
2 0&
3 2&
[8, 4, 4]^*&4\\
3&
3 1 3&
2 4 0&
[12, 6, 4]&6\\

4&
4 1 4 4&
2 0 2 0&
[16, 8, 4]&7\\

5&
2 2 3 1 4&
1 2 0 0 3&
[20, 10, 8]^*&8\\
6&
0 3 1 2 4 3&
0 3 2 0 0 4&
[24, 12, 8]&9\\
7&
2 2 2 0 2 2 2&
1 2 1 4 2 0 1&
[28, 14, 8]&10\\
8&
0 3 3 1 0 0 2 4&
2 4 0 2 2 2 1 2&
[32, 16,8]&10\\
9&
3 3 1 4 2 1 0 0 3&
4 2 3 0 0 2 1 2 2&
[36, 18, 10]&12\\

\hline
\end{array}
$$\label{Table:2}
\end{table}

\section{\textbf{Conclusion and Open Problems}}

In this paper we have studied double circulant and double negacirculant codes over the ring $R=\F_q+u\F_q.$ It might be worth looking at quasi-cyclic codes of higher index, or more generally, at quasi-twisted codes
of higher index. Four-circulant or four-negacirculant codes are natural candidates for this exploration \cite{SZS}.

While passing from fields to rings increases significantly the complexity of the proofs and calculations, it might still
be worth looking at other polynomial rings with $u$ having a minimal polynomial of higher degree, or defined by several variables.

\end{document}